\documentclass{revtex4}
\usepackage{amsmath,amsthm}
\usepackage{rotating}
\usepackage{multirow}
\textheight 20.50 cm
\topmargin 0 cm
\textwidth 17 cm
\oddsidemargin 0.0 cm 

\usepackage{graphicx}

\usepackage{amsfonts}
\newtheorem{theorem}{Theorem}[section]
\newtheorem{lemma}[theorem]{Lemma}
\newtheorem{proposition}[theorem]{Proposition}
\newtheorem{cor}[theorem]{Corollary}

\theoremstyle{remark}
\newtheorem{remark}[theorem]{Remark}

\theoremstyle{definition}
\newtheorem{definition}[theorem]{Definition}

\theoremstyle{example}
\newtheorem{example}[theorem]{Example}

\theoremstyle{notation}

\newcommand{\bra}[1]{\langle#1|}
\newcommand{\ket}[1]{|#1\rangle}

\begin{document}

\title{ Dressed coherent states in finite quantum systems: a cooperative game theory approach }            
\author{A. Vourdas}
\affiliation{Department of Computer Science,\\
University of Bradford, \\
Bradford BD7 1DP, United Kingdom\\a.vourdas@bradford.ac.uk}

\begin{abstract}
A quantum system with variables in ${\mathbb Z}(d)$ is considered.
Coherent density matrices and coherent projectors of rank $n$ are introduced, and their properties (e.g., the resolution of the identity)
are discussed. Cooperative game theory and in particular the Shapley methodology, is used to renormalize coherent states, 
into a particular type of coherent density matrices (dressed coherent states).
The $Q$-function of a Hermitian operator, is then renormalized into a physical analogue of the Shapley values. 
Both the $Q$-function and the Shapley values, are used to study 
the relocation of a Hamiltonian in phase space as the coupling constant varies, and its effect on the ground state of the system.
The formalism is also generalized for any total set of states, for which we have no resolution of the identity.
The dressing formalism leads to density matrices that resolve the identity, and makes them practically useful. 

\end{abstract}
\maketitle

\section{Introduction}

Coherent states  have been studied extensively in the literature\cite{coh1,coh2,coh3}.
They are an overcomplete set of states, and
there are subsets of the full set of coherent states which are total sets (i.e., there is no state which is orthogonal to all states in the subset).  
We consider quantum systems with $d$-dimensional Hilbert space\cite{FIN,FIN1,FIN2}, in which case the number of coherent states is $d^2$\cite{COH,COH1}.
We show that ideas from cooperative game theory can provide a deeper insight to the
overcompleteness of coherent states, and their linear dependence (lack of linear independence) which is related to it.

Cooperative game theory \cite{S0,S1,S2,S3} adds `corrections\rq{} to the individual contribution of a player,  
which reflect his contribution to coalitions (aggregations) of players.
This gives the Shapley value, which shows the share of a player in the `total worth\rq{} of the game, and which renormalizes (dresses) his lone contribution, 
by adding his contribution to all possible coalitions.
The sum of all the Shapley values, is the total worth of the game.

In analogy to this,  we add to the one-dimensional projector corresponding to a coherent state, other terms that reflect its role within 
spaces spanned by aggregations of coherent states.
This gives a renormalized (dressed) coherent state, which is a mixed state described by a coherent density matrix.
The sum of all coherent density matrices is the identity (resolution of the identity).

Coherent states are used to define the $Q$-function of a Hermitian operator $\theta$.
The dressed coherent states define a generalized $Q$-function, a physical analogue of the Shapley values.
As an application of these ideas, we study the relocation of a Hamiltonian in phase space as a function of
the coupling constant, and how this affects the ground state of the system. 
Such calculations on large quantum systems, can be used in the study of phase transitions.
Location indices of Hermitian operators in phase space, and comonotonicity (or cohabitation) intervals of the coupling constant, 
are used to quantify this relocation.
These calculations are performed with respect to either $Q$-function or Shapley values,
and their relative merits are discussed.

Most of the paper uses the formalism with coherent states. But we also consider a total set of states, 
which are not coherent states and for which we have no resolution of the identity.
The renormalization formalism in this case, leads to density matrices that resolve the identity and can be used in practical applications. 

In order to develop this formalism, we introduce in section \ref{cohden} coherent density matrices.
They are generalizations of coherent states (which are pure states) to mixed states.
We then introduce in section \ref{eac} coherent projectors of rank $n$, to spaces spanned by aggregations of $n$ coherent states.
The terms coherent density matrices and coherent projectors, reflect the fact that they resolve the identity, and that there is a closure property 
where under displacements they are transformed into other coherent density matrices and coherent projectors, correspondingly. 
 
In section \ref{SH} we present briefly some concepts from cooperative game theory, which are needed later.
The presentation uses the standard language of cooperative game theory, but it also 
introduces some `quantum terminology\rq{}, because our aim is to transfer these ideas in a quantum context. 
In section \ref{IV} we explain in a precise manner, the analogies between cooperative game theory, and aggregations of coherent states.
M\"obius transformations are used to identify overlaps and avoid double-counting:
in cooperative game theory, they quantify the added value in a coalition; and
in a quantum context they describe the double counting due to the overcompleteness of the coherent states. 

In section \ref{V} we transfer the concept of Shapley values into a quantum context.
We show that they are generalized $Q$-functions with respect to a particular set of coherent density matrices, 
which can be regarded as renormalized (dressed) versions of the `bare' coherent projectors.
The dressing formalism is related to the non-orthogonality and non-commutativity of the coherent projectors, as discussed in section \ref{VI}.
The properties of these coherent density matrices are presented in propositions \ref{V2} and \ref{cxz}.

In section \ref{gen} the formalism is generalized to any total set of states.
As an application we study in section \ref{appl}, 
the relocation of a Hamiltonian in phase space as the coupling constant varies, and
its effect on the ground state of the system.
We conclude is section \ref{D} with a discussion of our results.

\section{Generalized coherence in finite quantum systems}
\subsection{Coherent projectors}\label{vv}

We consider a quantum system with variables in ${\mathbb Z}(d)$, described by a $d$-dimensional Hilbert space $H(d)$.
$\ket{X;n}$ is the basis of position states, and $\ket{P;n}$ the basis of momentum states
($X$ and $P$ in the notation are not variables, they simply indicate position and momentum states).
They are related through a finite Fourier transform:\cite{FIN}:
\begin{eqnarray}
&&F=d^{-1/2}\sum _m\omega (mn)\ket{X;n}\bra{X:m};\;\;\;\;\omega(\alpha )=\exp \left(\frac{i2\pi \alpha}{d}\right )
\nonumber\\
&&\ket{P;n}=F\ket{X;n};\;\;\;\;m,n, \alpha\in {\mathbb Z}(d).
\end{eqnarray}
Displacement operators in the ${\mathbb Z}(d)\times {\mathbb Z}(d)$ phase space, are given by
\begin{eqnarray}\label{dis}
D(\alpha , \beta)=Z^{\alpha}X^{\beta}\omega (-2^{-1}\alpha \beta);\;\;\;\;
Z=\sum _m\omega (m)\ket{X;m}\bra{X;m};\;\;\;\;
X=\sum _m \ket{X;m+1}\bra{X;m}
\end{eqnarray}
We consider the case where $d$ is an odd integer (in this case the $2^{-1}$ exists in ${\mathbb Z}(d)$). 

Acting with $D(\alpha , \beta)$ on a (normalized) fiducial vector $\ket {\eta}$,
we get the $d^2$ coherent states\cite{COH,COH1}:
\begin{eqnarray}\label{coh}
&&\ket{C;\alpha, \beta}=D(\alpha , \beta)\ket{\eta};\;\;\;\;\alpha, \beta \in {\mathbb Z}(d)\nonumber\\
&&\ket {\eta}=\sum _m \eta _m\ket{X;m};\;\;\;\;\sum _m|\eta_m|^2=1.
\end{eqnarray}
The $C$ in the notation indicates coherent states.
The fiducial vector is called `generic', if any subset of $d$ coherent states, from the corresponding $d^2$ coherent states, are linearly independent.
Position and momentum states, are examples of non-generic fiducial vectors.  
Below we use generic fiducial vectors.
In some cases, we show explicitly the fiducial vector in the notation of coherent states, as $\ket{C(\eta);\alpha, \beta}$.

Let $H(\alpha,\beta)$ be the one-dimensional subspace  that contains the coherent states $\ket{C;\alpha,\beta}$,
and $\Pi(\alpha,\beta)$ be the corresponding `coherent projector'. Then
\begin{eqnarray}\label{1111}
&&\frac{1}{d}\sum _{\alpha,\beta}\Pi({\alpha,\beta})={\bf 1};\;\;\;\;\;\;\Pi ({\alpha,\beta})=\ket{C;\alpha,\beta}\bra{C;\alpha,\beta}\nonumber\\
&&D(\gamma, \delta)\Pi({\alpha ,\beta })D^{\dagger}(\gamma, \delta)=\Pi({\alpha +\gamma,\beta +\delta})
\end{eqnarray}
The first relation is a resolution of the identity. The second relation is a closure property, where
under displacement transformations, these projectors are transformed into other projectors of the same type.
The term `coherent\rq{} refers to these two properties.
Using the resolution of the identity, we expand an arbitrary state $\ket{s}$ in terms of coherent states as
\begin{eqnarray}\label{rt}
\ket{s}=\sum _{\alpha,\beta}s(\alpha, \beta)\ket{C;\alpha,\beta};\;\;\;\;\;s(\alpha, \beta)=\frac{1}{d}\bra{C;\alpha,\beta}s\rangle.
\end{eqnarray}

Given a Hermitian operator $\theta$, its $Q$-function is a real function given by
\begin{eqnarray}\label{mmm}
Q(\alpha,\beta\;|\;\theta)=\frac{1}{d}{\rm Tr}[\Pi (\alpha,\beta) \theta];\;\;\;\;\;\;\sum _{\alpha, \beta} Q(\alpha,\beta\;|\;\theta)={\rm Tr}(\theta).
\end{eqnarray}
In the case of a positive semi-definite operator $\theta$, its $Q$-function is non-negative.

Sometimes we will replace the `pair of indices notation\rq{}, with a `single index notation\rq{}, and denote the 
coherent states as $\ket{C; i}$ and the
coherent projectors as $\Pi(i)$, where $i$ takes the values in the set
\begin{eqnarray}\label{51}
\Omega=\{1,...,d^2\}.
\end{eqnarray}
We use here a bijective lexicographic map between ${\mathbb Z}(d)\times {\mathbb Z}(d)$ and $\Omega$, as follows:
\begin{eqnarray}\label{nota}
&&(0,0)\;\rightarrow\;1;\;\;\;(0,1)\;\rightarrow\;2;...;(0,d-1)\;\rightarrow\;d\nonumber\\
&&(1,0)\;\rightarrow\;d+1;...;(1,d-1)\;\rightarrow\;2d\nonumber\\
&&..............\nonumber\\
&&(d-1,0)\;\rightarrow\;d^2-d+1;...;(d-1,d-1)\;\rightarrow\;d^2
\end{eqnarray}

If $\theta _{mn}=\bra{X;m}\theta\ket{X;n}$
then
\begin{eqnarray}\label{xxx}
Q(\alpha,\beta|\theta)=\sum M(\alpha, \beta;m,n)\theta _{mn};\;\;\;\;M(\alpha, \beta;m,n)=\frac{1}{d}\bra{C;\alpha, \beta}X;m\rangle \langle X;n\ket{C;\alpha, \beta}
\end{eqnarray}
In the single index notation, we rewrite $ \theta _{mn}$ as $\theta _j$, and the $M(\alpha, \beta;m,n)$ as ${\mathfrak M}_{ij}$,  and we get
\begin{eqnarray}\label{xxxz}
Q(i|\theta)=\sum {\mathfrak M}_{ij} \theta _{j};\;\;\;i,j\in {\Omega}.
\end{eqnarray}
Both the Hermitian operator $\theta$ and the $Q$-function, are represented as vectors in a $d^2$-dimensional space.
Given the $Q(i|\theta)$,
Eq.(\ref{xxxz}) is a system of $d^2$ equations with $d^2$ unknowns, which can be used to calculate the $d^2$ values of $\theta _j$, i.e., the operator $\theta$.
The generic nature of the fiducial vector ensures that the $d^2\times d^2$ matrix ${\mathfrak M}_{ij}$ is invertible.

\subsection{Coherent density matrices}\label{cohden}
In the previous section, the fiducial state of coherent states, was a pure state.
We now consider the case where the fiducial state is a mixed (in general) state.
\begin{definition}\label{def}
Let $R_0$ be a density matrix, which we call fiducial density matrix.
The $d^2$ density matrices
\begin{eqnarray}\label{6v}
R (\alpha, \beta)=D(\alpha, \beta)R_0 D^{\dagger}(\alpha, \beta),
\end{eqnarray}
are coherent density matrices.

\end{definition}
If the fiducial density matrix $R_0$ is a projector describing a pure state, then the coherent density matrices reduce to the coherent projectors of the previous section.
The term coherent density matrices expresses the following properties which are analogous to Eq.(\ref{1111}):
\begin{proposition}
\begin{eqnarray}\label{11q}
&&\frac{1}{d}\sum _{\alpha,\beta}R({\alpha,\beta})={\bf 1}\label{5x}\\
&&D(\gamma, \delta)R({\alpha ,\beta })D^{\dagger}(\gamma, \delta)=R({\alpha +\gamma,\beta +\delta})\label{5c}
\end{eqnarray} 
\end{proposition}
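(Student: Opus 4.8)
The plan is to derive both identities directly from Definition~\ref{def} together with the known properties of the displacement operators $D(\alpha,\beta)$ and of the original coherent projectors recorded in Eq.~(\ref{1111}). The key observation is that each coherent density matrix $R(\alpha,\beta)=D(\alpha,\beta)R_0D^{\dagger}(\alpha,\beta)$ is built by conjugating a \emph{single} fiducial density matrix $R_0$, so anything linear in $R_0$ passes through the sum, and anything that holds for $\Pi(\alpha,\beta)$ because of the group structure of $D(\alpha,\beta)$ will hold for $R(\alpha,\beta)$ for the same reason.

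For the resolution of the identity (\ref{5x}), first I would expand the fiducial density matrix in its spectral decomposition, $R_0=\sum_k \lambda_k \ket{\eta_k}\bra{\eta_k}$ with $\lambda_k\ge 0$ and $\sum_k\lambda_k=1$. Then
\begin{eqnarray}
\frac{1}{d}\sum_{\alpha,\beta}R(\alpha,\beta)
=\frac{1}{d}\sum_{\alpha,\beta}D(\alpha,\beta)\Big(\sum_k\lambda_k\ket{\eta_k}\bra{\eta_k}\Big)D^{\dagger}(\alpha,\beta)
=\sum_k\lambda_k\Big[\frac{1}{d}\sum_{\alpha,\beta}D(\alpha,\beta)\ket{\eta_k}\bra{\eta_k}D^{\dagger}(\alpha,\beta)\Big].
\end{eqnarray}
Each bracket is exactly the resolution of the identity in Eq.~(\ref{1111}) for the coherent projectors built from the fiducial vector $\ket{\eta_k}$ — the statement there holds for an arbitrary normalized fiducial vector — hence each bracket equals ${\bf 1}$, and the outer sum collapses to $\big(\sum_k\lambda_k\big){\bf 1}={\bf 1}$. (Alternatively, if one does not wish to invoke the spectral decomposition, one can prove the identity $\frac{1}{d}\sum_{\alpha,\beta}D(\alpha,\beta)A\,D^{\dagger}(\alpha,\beta)=({\rm Tr}\,A)\,{\bf 1}$ for any operator $A$ directly from the orthogonality/completeness of the displacement operators as an operator basis on $H(d)$, and then set $A=R_0$ with ${\rm Tr}\,R_0=1$.)

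For the closure property (\ref{5c}), I would simply substitute the definition and use that the displacement operators form a projective representation of ${\mathbb Z}(d)\times{\mathbb Z}(d)$, so that $D(\gamma,\delta)D(\alpha,\beta)=\omega(\cdot)\,D(\alpha+\gamma,\beta+\delta)$ for some phase $\omega(\cdot)$ depending on the indices. Conjugation kills the phase:
\begin{eqnarray}
D(\gamma,\delta)R(\alpha,\beta)D^{\dagger}(\gamma,\delta)
=D(\gamma,\delta)D(\alpha,\beta)R_0D^{\dagger}(\alpha,\beta)D^{\dagger}(\gamma,\delta)
=D(\alpha+\gamma,\beta+\delta)R_0D^{\dagger}(\alpha+\gamma,\beta+\delta)
=R(\alpha+\gamma,\beta+\delta).
\end{eqnarray}
I do not expect a genuine obstacle here; the only point needing a little care is making explicit which structural fact about $D(\alpha,\beta)$ is being used in each part — the cocycle/projective-representation identity for the closure property, and either the spectral decomposition plus Eq.~(\ref{1111}) or the operator-basis completeness of $\{D(\alpha,\beta)\}$ together with ${\rm Tr}\,R_0=1$ for the resolution of the identity. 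Since $R_0$ enters linearly and is normalized, everything reduces to the already-established pure-state case.
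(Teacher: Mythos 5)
Your proof is correct. For the closure property your argument is exactly the paper's: conjugate by $D(\gamma,\delta)$, invoke the composition rule $D(\gamma,\delta)D(\alpha,\beta)=D(\gamma+\alpha,\delta+\beta)\,\omega[2^{-1}(\gamma\beta-\delta\alpha)]$, and observe that the cocycle phase cancels between the operator and its adjoint. For the resolution of the identity your primary route differs slightly from the paper's: you diagonalize $R_0=\sum_k\lambda_k\ket{\eta_k}\bra{\eta_k}$ and reduce to the pure-state resolution of the identity of Eq.~(\ref{1111}) applied to each eigenvector, then use $\sum_k\lambda_k=1$; the paper instead quotes the general operator identity $\frac{1}{d}\sum_{\alpha,\beta}D(\alpha,\beta)\Phi_0D^{\dagger}(\alpha,\beta)={\bf 1}\,{\rm Tr}(\Phi_0)$ (Eq.~(119) of its reference) and specializes to $\Phi_0=R_0$ with ${\rm Tr}\,R_0=1$ --- which is precisely the alternative you mention in parentheses. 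Both are valid; your spectral route is more self-contained given only Eq.~(\ref{1111}) and also anticipates the decomposition $R(\alpha,\beta)=\sum_n\lambda_n\ket{C(e_n);\alpha,\beta}\bra{C(e_n);\alpha,\beta}$ that the paper introduces immediately after the proposition, while the paper's route is shorter and makes clear that no positivity or normalization beyond ${\rm Tr}\,R_0=1$ is needed. The only point worth making explicit in your version is that Eq.~(\ref{1111}) holds for \emph{any} normalized fiducial vector (genericity plays no role in the resolution of the identity), since the eigenvectors $\ket{\eta_k}$ of $R_0$ need not be generic.
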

\begin{proof}
For any operator $\Phi _0$ we define the 
\begin{eqnarray}\label{6v}
\Phi (\alpha, \beta)=D(\alpha, \beta)\Phi _0 D^{\dagger}(\alpha, \beta).
\end{eqnarray}
It is known that (Eq.(119) in \cite{Fin2})
\begin{eqnarray}
\frac{1}{d}\sum _{\alpha,\beta}\Phi({\alpha,\beta})={\bf 1}{\rm Tr}(\Phi_0)
\end{eqnarray} 
Eq.(\ref{5x}) is a special case of this.
Analogous relation also holds in infinite dimensional Hilbert spaces (Eq.(38) in \cite{inf}). 

In order to prove Eq.(\ref{5c}), we multiply Eq.(\ref{6v}) with $D(\gamma, \delta)$ on the left and $D^{\dagger}(\gamma, \delta)$ on the right,
and use the relation
\begin{eqnarray}
D(\gamma, \delta)D(\alpha, \beta)=D(\gamma +\alpha, \delta +\beta)\omega[2^{-1}(\gamma \beta -\delta \alpha)]
\end{eqnarray} 
\end{proof}

We express $R_0$ in terms of its eigenvalues and eigenvectors, as
\begin{eqnarray}\label{6v}
R_0=\sum _{n=1}^d\lambda _n\ket{e_n}\bra{e_n};\;\;\;\;\sum _{n=1}^d\lambda _n=1
\end{eqnarray}
$R_0$ is a collection of the orthogonal pure states $\ket{e_n}$, with the probabilities $\lambda _n$ attached to them.
Also 
\begin{eqnarray}\label{6v}
R (\alpha, \beta)=\sum _{n=1}^d\lambda _n\ket{C(e_n); \alpha, \beta}\bra{C(e_n); \alpha, \beta}
\end{eqnarray}
Here the $e_n$ in the notation $\ket{C(e_n); \alpha, \beta}$, indicates explicitly the fiducial vector. 
It is seen that in the coherent states we use a single fiducial vector, while in the coherent density matrices we use
$d$ orthogonal fiducial vectors $\ket{e_n}$ with probabilities $\lambda _n$, described by the fiducial density matrix $R_0$.
The set of coherent density matrices $\{R (\alpha, \beta)\}$ describes the $d$ sets of coherent states 
$\Sigma _n=\{\ket{C(e_n); \alpha, \beta}|\alpha, \beta \in {\mathbb Z}(d)\}$, with probabilities $\lambda _n$.
The fiducial density matrix $R_0$ is called generic, if its eigenvalues are different from each other, and if for a fixed $n$ any $d$ of the $d^2$ coherent states $\{\ket{C(e_n); \alpha, \beta}\}$, are linearly independent.

An arbitrary state can be expanded as
\begin{eqnarray}
\ket {s}=\frac{1}{d}\sum _{\alpha,\beta}R({\alpha,\beta})\ket{s}=\sum _{n=1}^d\sum _{\alpha,\beta}s_n(\alpha, \beta)\ket{C(e_n); \alpha, \beta};\;\;\;\;
s_n(\alpha, \beta)=\frac{\lambda _n}{d}\bra{C(e_n); \alpha, \beta}s\rangle
\end{eqnarray} 
Comparison with Eq.(\ref{rt}), shows that 
the coherent states are a single overcomplete basis, while coherent density matrices are a set of $d$ overcomplete bases, with probabilities attached to them.

Given a Hermitian operator $\theta$, its $Q$-function with respect to the coherent density matrices $R (\alpha,\beta)$, is given by
\begin{eqnarray}\label{mmm1}
Q_R(\alpha,\beta|\theta)=\frac{1}{d}{\rm Tr}[R (\alpha,\beta) \theta].
\end{eqnarray}
It is easily seen that
\begin{eqnarray}\label{mmm11}
\sum _{\alpha, \beta} Q_R(\alpha,\beta|\theta)={\rm Tr}(\theta);\;\;\;\;
Q_R(\alpha,\beta|\lambda _1\theta _1+\lambda _2\theta _2)=\lambda _1Q_R(\alpha,\beta|\theta _1)
+\lambda _2Q_R(\alpha,\beta|\theta _2)
\end{eqnarray}
Using Eq.(\ref{6v}) we rewrite this as
\begin{eqnarray}
Q_R(\alpha,\beta|\theta)=\sum _{n=1}^d\lambda _nQ_n(\alpha,\beta\;|\;\theta);\;\;\;
Q_n(\alpha,\beta|\theta)=\frac{1}{d}\bra{C(e_n); \alpha, \beta}\theta \ket{C(e_n); \alpha, \beta}.
\end{eqnarray}
We note that the $d$ states in the set $\Sigma (\alpha, \beta)=\{\ket{C(e_n); \alpha, \beta}|n=1,...,d\}$ are orthogonal to each other, and 
\begin{eqnarray}
d\sum _{n=1}^dQ_n(\alpha,\beta|\theta)={\rm Tr}(\theta).
\end{eqnarray}

\subsection{Coherent projectors of rank $n$ describing aggregations of $n$ coherent states}\label{eac}

Let $A=\{(\alpha _1,\beta _1);...;(\alpha _n,\beta _n)\}$  be a subset of $\Omega$.
We denote as $|A|$ the cardinality of the set $A$.
We consider the subspace of $H(d)$:
\begin{eqnarray}
H(A)=H({\alpha _1,\beta _1};...;{\alpha _n,\beta _n})={\rm span}[H({\alpha _1,\beta _1}) \cup ...\cup H({\alpha _n,\beta _n})]
\end{eqnarray}
It contains all superpositions
$\kappa _1\ket {C;{\alpha _1,\beta _1}}+...+\kappa _n \ket{C;{\alpha _n,\beta _n}}$, and in this sense it describes the aggregation of these coherent states.
We denote the corresponding projector to this subspace, as $\Pi({\alpha _1,\beta _1};...;{\alpha _n,\beta _n})$ or as $\Pi(A)$.
In the single index notation, these projectors are $\Pi(i_1,...,i_n)$ and the corresponding subspace $H(i_1,...,i_n)$.
We note that the  order of the indices is not important. 
Since we consider generic fiducial vectors, if $|A|\le d$, the coherent states are linearly independent, and the $H(A)$ is $|A|$-dimensional.
Also
\begin{eqnarray}\label{g7}
|A|\ge d\;\;\rightarrow\;\;\Pi(A)={\bf1}.
\end{eqnarray}
We note that
\begin{eqnarray}\label{123}
\Pi(A)\ne \sum _{i\in A}\Pi (i).
\end{eqnarray}
The aggregation is different than the sum of its parts. Later we will see an analogous relation for coalitions in cooperative game theory (Eq.(\ref{149})).

We use the shorthand notation
\begin{eqnarray}
A+(\gamma, \delta)=\{(\alpha _1+\gamma,\beta _1+\delta);...;(\alpha _n+\gamma,\beta _n+\delta)\}.
\end{eqnarray}
The $\Pi(A)$ are coherent projectors in the sense of the properties in the following proposition: 
\begin{proposition}
\begin{itemize}
\item[(1)]
The projectors  $\Pi(A)$ resolve the identity as follows:
\begin{eqnarray}\label{45}
\frac{1}{d|A|}\sum _{\gamma, \delta}\Pi[A+(\gamma, \delta)]={\bf 1};\;\;\;|A|<d,
\end{eqnarray}
\item[(2)]
The following `closure property\rq{}, holds:
\begin{eqnarray}\label{44}
D(\gamma, \delta)\Pi(A)D^{\dagger}(\gamma, \delta)=
\Pi[A+(\gamma, \delta)].
\end{eqnarray}
\end{itemize}
\end{proposition}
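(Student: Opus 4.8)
The plan is to prove the two parts in reverse order: establish the closure property (2) first, then deduce the resolution of the identity (1) from it together with a Schur-type symmetry argument.

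For part (2), I would start from $H(A)=\mathrm{span}[\ket{C;\alpha_1,\beta_1},\ldots,\ket{C;\alpha_n,\beta_n}]$ and use the group law for displacement operators already invoked in the previous proof, $D(\gamma,\delta)D(\alpha,\beta)=D(\gamma+\alpha,\delta+\beta)\,\omega[2^{-1}(\gamma\beta-\delta\alpha)]$, to see that $D(\gamma,\delta)\ket{C;\alpha_k,\beta_k}=D(\gamma,\delta)D(\alpha_k,\beta_k)\ket{\eta}$ equals $\ket{C;\alpha_k+\gamma,\beta_k+\delta}$ up to a phase. Hence the unitary $D(\gamma,\delta)$ carries the spanning set of $H(A)$ onto the spanning set of $H[A+(\gamma,\delta)]$, so it maps $H(A)$ isometrically onto $H[A+(\gamma,\delta)]$. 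Since conjugating the orthogonal projector onto a subspace by a unitary gives the orthogonal projector onto the image subspace, $D(\gamma,\delta)\Pi(A)D^{\dagger}(\gamma,\delta)=\Pi[A+(\gamma,\delta)]$, which is Eq.(\ref{44}).

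For part (1), set $S=\sum_{\gamma,\delta}\Pi[A+(\gamma,\delta)]$, the sum running over all $d^2$ pairs in ${\mathbb Z}(d)\times{\mathbb Z}(d)$. Conjugating by an arbitrary $D(\mu,\nu)$ and using part (2) gives $D(\mu,\nu)SD^{\dagger}(\mu,\nu)=\sum_{\gamma,\delta}\Pi[A+(\gamma+\mu,\delta+\nu)]$, and the change of summation variable $(\gamma,\delta)\mapsto(\gamma+\mu,\delta+\nu)$, a bijection of ${\mathbb Z}(d)\times{\mathbb Z}(d)$, shows this equals $S$. Thus $S$ commutes with every displacement operator. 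Since the displacement operators act irreducibly on $H(d)$ (the standard property of the finite Heisenberg--Weyl group, see e.g. \cite{FIN}), Schur's lemma forces $S=c\,{\bf 1}$ for a scalar $c$. Taking the trace: on one hand $\mathrm{Tr}(S)=cd$; on the other, because $|A|<d$ and the fiducial vector is generic the $|A|$ coherent states spanning $H[A+(\gamma,\delta)]$ are linearly independent, so $\dim H[A+(\gamma,\delta)]=|A|$ and $\mathrm{Tr}\,\Pi[A+(\gamma,\delta)]=|A|$; summing over the $d^2$ pairs gives $\mathrm{Tr}(S)=d^2|A|$. Therefore $c=d|A|$ and $\tfrac{1}{d|A|}S={\bf 1}$, which is Eq.(\ref{45}).

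The only delicate point is the appeal to irreducibility of the displacement operators in part (1); this is a classical fact about finite quantum systems and not a genuine obstacle. An alternative that avoids Schur's lemma would be to compute $\bra{X;m}S\ket{X;n}$ directly, but this is far messier because $\Pi(A)$ is the projector onto the span of \emph{non-orthogonal} coherent states and has no simple closed form, whereas the symmetry argument above is essentially immediate once (2) is in hand. One should also record that the hypothesis $|A|<d$ is essential: for $|A|\ge d$ one has $\Pi(A)={\bf 1}$ by Eq.(\ref{g7}), and the normalization $\tfrac{1}{d|A|}$ in Eq.(\ref{45}) would no longer be correct.
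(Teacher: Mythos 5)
Your proof is correct. Note that the paper itself does not supply an argument here — it simply defers to reference \cite{v16a} — so your self-contained derivation (unitary conjugation of the spanning set for the closure property, then translation invariance of the sum, Schur's lemma for the irreducible Heisenberg--Weyl action, and a trace count using $\mathrm{rank}\,\Pi[A+(\gamma,\delta)]=|A|$ for a generic fiducial vector) is a legitimate and complete replacement; the only cosmetic remark is that for $|A|=d$ the identity in Eq.(\ref{45}) also happens to hold trivially since each $\Pi[A+(\gamma,\delta)]={\bf 1}$, so the genuinely excluded case is $|A|>d$.
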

\begin{proof}
We gave the proof in \cite{v16a}. 
\end{proof}
These two properties are generalizations of those in Eq.(\ref{1111}), and they justify the name `coherent projectors of rank $n$'.
Using the resolution of the identity in Eq.(\ref{45}), we can express an arbitrary state $\ket{s}$ as
\begin{eqnarray}
\ket{s}=\frac{1}{d|A|}\sum _{\gamma, \delta}\Pi[A+(\gamma, \delta)]\ket{s};\;\;\;|A|<d.
\end{eqnarray}
This expresses $\ket{s}$ as a sum of $d^2$ vectors in the spaces $H[A+(\gamma, \delta)]$, and is a generalization of Eq.(\ref{rt}). 
\begin{cor}
The following resolution of the identity involves all $\begin{pmatrix}d^2\\k\\\end{pmatrix}$ subsets of $\Omega$, with given cardinality $|A|=k$:
\begin{eqnarray}\label{c45}
\frac{1}{d}\begin{pmatrix}d^2-1\\k-1\\\end{pmatrix}^{-1}\sum _{|A|=k}\Pi(A)={\bf 1};\;\;\;k<d,
\end{eqnarray}
\end{cor}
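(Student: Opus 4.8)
The plan is to deduce Eq.~(\ref{c45}) from the single-orbit resolution of the identity Eq.~(\ref{45}) by a double-counting argument over all $k$-subsets of $\Omega$. First I would fix an arbitrary subset $A\subset\Omega$ with $|A|=k<d$ and rewrite Eq.~(\ref{45}) as $\sum_{\gamma,\delta}\Pi[A+(\gamma,\delta)]=dk\,{\bf 1}$, the sum running over all $d^2$ pairs $(\gamma,\delta)\in{\mathbb Z}(d)\times{\mathbb Z}(d)$. Summing this over all $\binom{d^2}{k}$ subsets $A$ of cardinality $k$ gives
\[
\sum_{|A|=k}\ \sum_{\gamma,\delta}\Pi[A+(\gamma,\delta)]=dk\binom{d^2}{k}\,{\bf 1}.
\]

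Next I would reorganize the same double sum according to the value $B=A+(\gamma,\delta)$ of the shifted set. The translation $(\alpha,\beta)\mapsto(\alpha+\gamma,\beta+\delta)$ is a bijection of ${\mathbb Z}(d)\times{\mathbb Z}(d)$, hence (via the lexicographic identification Eq.~(\ref{nota})) a permutation of $\Omega$ that carries $k$-subsets to $k$-subsets. Therefore, for each $k$-subset $B$ and each $(\gamma,\delta)$ there is exactly one $A$ with $A+(\gamma,\delta)=B$, namely $A=B-(\gamma,\delta)$; so every $B$ with $|B|=k$ occurs exactly $d^2$ times in the double sum. Hence the left-hand side above also equals $d^2\sum_{|B|=k}\Pi(B)$, and comparing the two evaluations yields $\sum_{|B|=k}\Pi(B)=\tfrac{k}{d}\binom{d^2}{k}\,{\bf 1}$.

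Finally I would simplify the coefficient with the identity $\binom{d^2}{k}=\tfrac{d^2}{k}\binom{d^2-1}{k-1}$, which turns $\tfrac{k}{d}\binom{d^2}{k}$ into $d\binom{d^2-1}{k-1}$ and reproduces Eq.~(\ref{c45}) after dividing through. The hypothesis $k<d$ enters only to guarantee that Eq.~(\ref{45}) is applicable (once $|A|\ge d$ one has $\Pi(A)={\bf 1}$ by Eq.~(\ref{g7}) and the normalization changes). I do not anticipate a genuine obstacle here: the one point needing care is that the reorganized sum has multiplicity exactly $d^2$, uniformly in $B$, which is immediate from the group law $D(\gamma,\delta)D(\alpha,\beta)=D(\gamma+\alpha,\delta+\beta)\,\omega[2^{-1}(\gamma\beta-\delta\alpha)]$ — the phase is irrelevant since the conjugation $D\Pi D^{\dagger}$ is insensitive to it, as already used in Eq.~(\ref{44}).
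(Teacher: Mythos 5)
Your proof is correct and follows essentially the same route as the paper: both derive Eq.~(\ref{c45}) by aggregating the translated resolutions of the identity Eq.~(\ref{45}) over all $k$-subsets and matching the combinatorial coefficient via $\frac{k}{d}\binom{d^2}{k}=d\binom{d^2-1}{k-1}$. Your bookkeeping — summing over all $\binom{d^2}{k}$ subsets and observing that each $B$ has multiplicity exactly $d^2$ because $A\mapsto A+(\gamma,\delta)$ is a bijection on $k$-subsets — is in fact a slightly more careful version of the paper's orbit count, since it does not need every translation orbit to have exactly $d^2$ distinct elements.
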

\begin{proof}
There are $d^2$ terms in the resolution of the identity in Eq.(\ref{45}). Therefore there are 
\begin{eqnarray}
r=\frac{1}{d^2}\begin{pmatrix}d^2\\k\\\end{pmatrix}
\end{eqnarray}
resolutions of the identity, that involve sets of indices with given cardinality $|A|=k$.
We add all of them and we get the resolution of the identity in Eq.(\ref{c45}).  
$rk$ is equal to $\begin{pmatrix}d^2-1\\k-1\\\end{pmatrix}$.
\end{proof}
\begin{lemma}\label{fg}
Let
\begin{eqnarray}\label{1cd}
g_{ij}(A)=\langle C;i\ket{C;j};\;\;\;i,j\in A\subseteq \Omega.
\end{eqnarray}
$g_{ij}(A)$ is an $|A|\times |A|$ Hermitian matrix of rank $\min(|A|,d)$. 
In the case $|A|\le d$ its inverse
$G(A)=[g(A)]^{-1}$ exists, and it is  an $|A|\times |A|$ Hermitian positive definite matrix.
\end{lemma}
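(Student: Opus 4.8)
The plan is to treat $g(A)$ as the Gram matrix of the family of coherent states $\{\ket{C;i}\,:\,i\in A\}$ and to invoke the standard correspondence between a Gram matrix and the span of the underlying vectors. Hermiticity is immediate from the definition, since $g_{ij}(A)=\langle C;i\ket{C;j}=\overline{\langle C;j\ket{C;i}}=\overline{g_{ji}(A)}$. For positive semi-definiteness and the rank count I would observe that, for any column vector $c=(c_i)_{i\in A}$,
\[
\sum_{i,j\in A}\overline{c_i}\,c_j\,g_{ij}(A)=\left\|\sum_{i\in A}c_i\ket{C;i}\right\|^2\ge 0,
\]
with equality if and only if $\sum_{i\in A}c_i\ket{C;i}=0$. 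Hence $g(A)$ is Hermitian positive semi-definite, its kernel is exactly the space of linear-dependence relations among the $\ket{C;i}$, and therefore ${\rm rank}\,g(A)=\dim{\rm span}\{\ket{C;i}\,:\,i\in A\}$.

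Next I would evaluate this dimension using the generic hypothesis on the fiducial vector. If $|A|\le d$, genericity guarantees that the $|A|$ coherent states in $A$ are linearly independent, so the span has dimension $|A|=\min(|A|,d)$; moreover the quadratic form above then vanishes only at $c=0$, so $g(A)$ is in fact positive definite, hence invertible. If $|A|>d$, pick any $d$-element subset of $A$: by genericity those $d$ coherent states are linearly independent, so they span a $d$-dimensional subspace of $H(d)$, which must be all of $H(d)$; therefore ${\rm span}\{\ket{C;i}\,:\,i\in A\}=H(d)$ has dimension $d=\min(|A|,d)$. This gives the claimed rank in all cases.

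Finally, for $|A|\le d$ it remains to record that the inverse of a Hermitian positive definite matrix is again Hermitian positive definite: $G(A)^{\dagger}=(g(A)^{-1})^{\dagger}=(g(A)^{\dagger})^{-1}=g(A)^{-1}=G(A)$, and if $\mu>0$ is an eigenvalue of $g(A)$ then $\mu^{-1}>0$ is the corresponding eigenvalue of $G(A)$, so $G(A)$ has strictly positive spectrum. I do not expect a genuinely hard step here; the only point requiring care is the rank computation in the overcomplete regime $|A|>d$, where one must use that `generic' forces every $d$-subset of coherent states to be a basis of $H(d)$, so that the span saturates at dimension $d$ rather than continuing to grow. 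Everything else is the textbook Gram-matrix argument.
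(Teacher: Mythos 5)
Your proposal is correct and follows essentially the same route as the paper: both identify $g(A)$ as the Gram matrix of the coherent states in $A$, use the quadratic form $\sum_{i,j}\overline{c_i}c_jg_{ij}(A)=\bigl\|\sum_i c_i\ket{C;i}\bigr\|^2$ together with the genericity of the fiducial vector to get positive definiteness when $|A|\le d$ and rank $d$ when $|A|>d$. If anything, your writeup is slightly more explicit about the kernel/rank correspondence, but the underlying argument is identical.
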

\begin{proof}
If $|A|\le d$, an arbitrary (normalized) state $\ket{s}$ in the $|A|$-dimensional space $H(A)$, can be written as 
\begin{eqnarray}
\ket{s}=\sum _{i}s_i\ket{C;i};\;\;\;i\in A
\end{eqnarray}
The coherent states $\ket{C;i}$ with $i\in A$, are linearly independent because we use a generic fiducial vector.
Then
\begin{eqnarray}
\langle s\ket{s}=\sum _{i,j}s_i^*s_jg_{ij}(A)=1.
\end{eqnarray}
This shows that the $|A|\times |A|$ Hermitian matrix $g_{ij}(A)$ is positive definite. Therefore it is invertible, and 
its inverse $G_{ij}(A)$ is also a Hermitian positive definite matrix.

In the case $|A|>d$, $H(A)$ is the full space $H(d)$.
The corresponding coherent states $\ket{C;i}$ with $i\in A$, are not linearly independent and the rank of the matrix $g_{ij}(A)$ is $d$.
Therefore in this case the $g_{ij}(A)$ is not invertible.
\end{proof}

\begin{proposition}
Let $A$ be a subset of $\Omega$ with $|A|\le d$. The projector $\Pi(A)$ can be expressed as follows:
\begin{eqnarray}\label{cf1}
\Pi(A)=\sum _{i,j}G_{ij}(A)\ket{C;i}\bra{C;j};\;\;\;i,j\in A.
\end{eqnarray}
The $G_{ij}(A)$ has been defined in lemma \ref{fg}.
In the case that the cardinality of $A$ is $d$, 
\begin{eqnarray}\label{rr}
\sum _{i,j}G_{ij}(A)\ket{C;i}\bra{C;j}={\bf 1};\;\;\;|A|=d;\;\;\;i,j\in A.
\end{eqnarray}
\end{proposition}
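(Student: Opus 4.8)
The plan is to show that the operator $P \equiv \sum_{i,j\in A} G_{ij}(A)\,\ket{C;i}\bra{C;j}$ acts as the identity on the subspace $H(A)$ and annihilates its orthogonal complement; since $\Pi(A)$ is by definition the orthogonal projector onto $H(A)$, this identifies $P$ with $\Pi(A)$ and gives Eq.(\ref{cf1}).

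First I would check the action of $P$ on a generating vector $\ket{C;k}$ with $k\in A$. Using $\langle C;j\ket{C;k}=g_{jk}(A)$ together with the defining property $\sum_{j\in A} G_{ij}(A)\,g_{jk}(A)=\delta_{ik}$ of the inverse matrix $G(A)=[g(A)]^{-1}$ (which exists by Lemma \ref{fg}, because $|A|\le d$ and the fiducial vector is generic, so $g(A)$ is positive definite hence invertible), one gets $P\ket{C;k}=\sum_{i,j\in A}G_{ij}(A)g_{jk}(A)\ket{C;i}=\sum_{i\in A}\delta_{ik}\ket{C;i}=\ket{C;k}$. Since the $\ket{C;i}$ with $i\in A$ are linearly independent and span $H(A)$, this shows that $P$ restricted to $H(A)$ is the identity on $H(A)$, and in particular the range of $P$ is exactly $H(A)$.

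Next I would treat the orthogonal complement. If $\ket{w}\in H(A)^{\perp}$, then $\langle C;j\ket{w}=0$ for every $j\in A$, hence $P\ket{w}=0$. Writing an arbitrary $\ket{s}\in H(d)$ as $\ket{s}=\ket{v}+\ket{w}$ with $\ket{v}\in H(A)$ and $\ket{w}\in H(A)^{\perp}$, linearity gives $P\ket{s}=\ket{v}=\Pi(A)\ket{s}$, so $P=\Pi(A)$. As a consistency check, $P$ is manifestly Hermitian since $G_{ij}(A)^{*}=G_{ji}(A)$ and $(\ket{C;i}\bra{C;j})^{\dagger}=\ket{C;j}\bra{C;i}$, and it has been shown to be idempotent with range $H(A)$, so it is indeed the orthogonal projector onto $H(A)$.

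Finally, for the case $|A|=d$: the $d$ coherent states $\ket{C;i}$, $i\in A$, are linearly independent in the $d$-dimensional space $H(d)$, hence form a basis and $H(A)=H(d)$; therefore $\Pi(A)={\bf 1}$, and Eq.(\ref{cf1}) specializes to Eq.(\ref{rr}). I do not expect a genuine obstacle here; the only points requiring care are the appeal to Lemma \ref{fg} to guarantee invertibility of $g(A)$, and keeping all index summations restricted to $A$, so that the relation actually used is $g(A)G(A)$ equal to the $|A|\times|A|$ identity rather than any larger matrix identity.
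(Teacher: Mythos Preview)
Your proof is correct and follows essentially the same route as the paper: apply the candidate operator to $\ket{C;k}$ with $k\in A$ and use $G(A)g(A)=\mathbb{1}_{|A|}$ to get $\ket{C;k}$ back, then note that any vector orthogonal to all $\ket{C;j}$ with $j\in A$ is annihilated. Your version is slightly more detailed (the explicit decomposition $\ket{s}=\ket{v}+\ket{w}$ and the Hermiticity check), but the argument is the same.
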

\begin{proof}
Acting with $\Pi(A)$ in Eq.(\ref{cf1}), on a coherent state $\ket{C;k}$ with $k\in A$, we get
\begin{eqnarray}
\Pi(A)\ket{C;k}=\sum _{i,j}G_{ij}(A)\ket{C;i}g_{jk}=\sum _{i}\delta_{ik}\ket{C;i}=\ket{C;k}.
\end{eqnarray}
Also if $\ket {u}$ is a state orthogonal to all $\ket{C;k}$ with $k\in A$ then $\Pi(A)\ket{u}=0$.
Therefore $\Pi(A)$ is a projector to the subspace $H(A)$.
\end{proof}
\begin{remark}
Let $\ket{s}$ be an arbitrary state, $\theta$ a Hermitian operator, and
$s_i=\bra{C;i}s\rangle$.
Using Eq.(\ref{rr}), we express an arbitrary state $\ket{s}$ as
\begin{eqnarray}\label{q3}
\ket{s}=\sum _{i\in A}\sigma _i\ket{C;i};\;\;\;\sigma_i=\sum _{j\in A}G_{ij}(A)s_j ;\;\;\;|A|=d.
\end{eqnarray}
Also using the resolution of the identity in Eq.(\ref{1111}), we express the state $\ket{s}$ as
\begin{eqnarray}\label{q4}
\ket{s}=\frac{1}{d}\sum _{i\in \Omega}s_i\ket{C;i}.
\end{eqnarray}
The expansion in Eq.(\ref{q3}) involves $d$ coherent states labeled with elements of the set $A$.
The expansion in Eq.(\ref{q4}) involves all $d^2$ coherent states. 
We note that
\begin{eqnarray}
\frac{1}{d}\sum _{i\in \Omega}g_{ij}(\Omega)s_j=s_i.
\end{eqnarray}
\end{remark}
Given a Hermitian operator $\theta$, its $Q$-function with respect to the coherent projectors $\Pi(A)$ is given by
\begin{eqnarray}\label{mmm2}
Q(A|\theta)=\frac{1}{d}{\rm Tr}[\Pi(A)\theta];.
\end{eqnarray}
It is easily seen that
\begin{eqnarray}\label{mmm22}
&&\sum _{\alpha,, \beta} Q[A+(\alpha,\beta)|\theta]=|A|{\rm Tr}(\theta)\nonumber\\
&&Q[A+(\alpha,\beta)|\lambda _1\theta _1+\lambda _2\theta _2]=\lambda _1Q[A+(\alpha,\beta)|\theta _1]
+\lambda _2Q[A+(\alpha,\beta)|\theta _2].
\end{eqnarray}

We note that the $d^2$ values of $Q(i|\theta)$ determine the operator $\theta$ (Eq.(\ref{xxxz})), and then the $Q(A|\theta)$:
\begin{eqnarray}\label{QQQ}
Q(A|\theta)=\frac{1}{d}\sum {\rm Tr}[\Pi(A)\theta];\;\;\;\theta _{j}=\sum({\mathfrak M}^{-1})_{ij}Q(i|\theta)
\end{eqnarray}
Here the second equation gives the $\theta$ as $d^2\times 1$ vector which needs to be converted into a $d\times d$ matrix in order to be used in the first equation. 
There are $2^{d^2}-1$ values of $Q(A|\theta)$ (one for each non-empty subset of $\Omega$), but only $d^2$ of them are independent.
Also from Eq.(\ref{123}), it follows that
\begin{eqnarray}
Q(A|\theta)\ne \sum _{i\in A}Q(i|\theta).
\end{eqnarray}

\section{Cooperative game theory and Shapley values}\label{SH}

Given a set $N$ of $|N|$ players, a coalition is a subset $A\subseteq N$. Von Neumann and Morgenstern\cite{S0} introduced the characteristic function 
which is a real valued function that assigns a value $v(A)$ to each subset of players $A\subseteq N$:
\begin{eqnarray}\label{game}
A\;\rightarrow \;v(A);\;\;\;A\subseteq N;\;\;\;v(\emptyset)=0.
\end{eqnarray}
This is interpreted as the `worth' (or `value\rq{}, or `power\rq{}) of the coalition $A$.
$v(N)$ is the total worth, and cooperative game theory aims to find a `sharing rule' of $v(N)$ among the $|N|$ players.
We call $\Sigma _G$ the set of these cooperative games.

There are $2^{|N|}$ subsets of $N$, and consequently the characteristic function takes $2^{|N|}$ values (with $v(\emptyset)=0$). 
Therefore there are $2^{|N|}-1$ degrees of freedom, and the $v(A)$ is a vector in ${\mathbb R}^{{2^{|N|}}-1}$.
If $A=\{i_1,...,i_k\}$ is a coalition of $k$ players, then in general
\begin{eqnarray}\label{149}
v(A)\ne v(i_1)+...+v(i_k).
\end{eqnarray}

Addition of two games is defined with the characteristic function
\begin{eqnarray}
(\lambda _1v_1+\lambda _2v_2)(A)=\lambda _1v_1(A)+\lambda _2v_2(A).
\end{eqnarray}
The marginal contribution of player $i$ to the coalition $A$, is given by
\begin{eqnarray}\label{777AA}
&&{\cal V}(i|A)=v[A\cup\{i\}]-v(A);\;\;\;A\subseteq N\setminus \{i\}\nonumber\\&&{\cal V}(i|\emptyset)=v(i),
\end{eqnarray}
In order to find the share of the player $i$ from the total worth $v(N)$,
we need a method of averaging the marginal contributions ${\cal V}(i|A)$, over all possible coalitions $A$.
Shapley proposed such a method which we describe briefly below\cite{S1,S2,S3}.

Let $\pi$ be a permutation of the players, and ${\cal N}_{\pi}(i)$ the position of $i$ in this permutation.
${\cal A}(\pi; i)$ is the set
\begin{eqnarray}\label{99}
{\cal A}(\pi; i)=\{j\;|\;{\cal N}_{\pi}(j)<{\cal N}_{\pi}(i)\}
\end{eqnarray}
that contains players preceding player $i$ in the permutation $\pi$.
The 
\begin{eqnarray}
{\cal V}[i|{\cal A}(\pi ;i)]=v[{\cal A}(\pi ;i)\cup \{i\}]-v[{\cal A}(\pi;i)]
\end{eqnarray}
quantifies the contribution of player $i$ to the coalition of players $j$ such that ${\cal N}_{\pi}(j)\le {\cal N}_{\pi}(i)$.
In order to treat all players equally, Shapley considered all permutations.
\begin{proposition}\label{pro10}
The Shapley value for player $i$ (i.e., the share of player $i$ from the total worth $v(N)$) is given by the following formulas, which are equivalent to each other.
\begin{itemize}
\item[(1)]
The players join a coalition in an order described by a permutation $\pi$.
The marginal contribution of player $i$, to the coalition ${\cal A}(\pi ;i)$ of players preceding $i$ in the permutation $\pi$, is
${\cal V} [i|{\cal A}(\pi ;i)]$.
Summation is over all $|N|!$ permutations: 
\begin{eqnarray}\label{120}
{\cal S}(i)={\cal S}(i|v)=\frac{1}{|N|!}\sum _{\pi}{\cal V} [i|{\cal A}(\pi ;i)]
\end{eqnarray}
We use two notations ${\cal S}(i)$ and ${\cal S}(i|v)$ because sometimes it is necessary to indicate explicitly which characteristic function we used.
\item[(2)]
We find an average of the marginal contributions ${\cal V}(i|A)$ of player $i$, to all coalitions.
Coalitions with the same cardinality have equal weight.
Coalitions with different cardinalities are equally likely.
\begin{eqnarray}\label{12X}
{\cal S}(i)={\cal S}(i|v)&=&\frac{1}{|N|}\sum _{A\subseteq N\setminus \{i\}}\begin{pmatrix}|N|-1\\|A|\\\end{pmatrix}^{-1}{\cal V}(i|A)\nonumber\\
&=&\frac{1}{|N|}\left [v(i)+\frac{1}{|N|-1}\sum _{|A|=1}{\cal V}(i|A)+\begin{pmatrix}|N|-1\\2\\\end{pmatrix}^{-1}\sum _{|A|=2}{\cal V}(i|A)+...
\right].
\end{eqnarray}
There are $\begin{pmatrix}|N|-1\\|A|\\\end{pmatrix}$ coalitions in $N\setminus \{i\}$ with a given cardinality $|A|$, 
and this leads to the coefficient $\begin{pmatrix}|N|-1\\|A|\\\end{pmatrix}^{-1}$. 
\end{itemize} 
\end{proposition}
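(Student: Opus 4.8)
The plan is to derive formula (2) of Proposition \ref{pro10} directly from formula (1), by a counting argument that groups the $|N|!$ permutations according to which players precede player $i$. First I would partition the permutations appearing in Eq.(\ref{120}) into classes indexed by the predecessor set ${\cal A}(\pi;i)$, which ranges over all subsets $A\subseteq N\setminus\{i\}$. For a fixed $A$ with $|A|=a$, the permutations $\pi$ with ${\cal A}(\pi;i)=A$ are precisely those that list the $a$ elements of $A$ (in some order) in the first $a$ positions, then $i$ in position $a+1$, then the remaining $|N|-1-a$ players (in some order); there are $a!\,(|N|-1-a)!$ of them, and for each of them ${\cal V}[i|{\cal A}(\pi;i)]={\cal V}(i|A)$ depends only on $A$. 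Hence
\begin{eqnarray}\label{auxshap}
{\cal S}(i)=\frac{1}{|N|!}\sum_{A\subseteq N\setminus\{i\}} |A|!\,(|N|-1-|A|)!\;{\cal V}(i|A).
\end{eqnarray}

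Next I would simplify the combinatorial weight. Using
\begin{eqnarray}
\frac{|A|!\,(|N|-1-|A|)!}{|N|!}=\frac{1}{|N|}\cdot\frac{|A|!\,(|N|-1-|A|)!}{(|N|-1)!}=\frac{1}{|N|}\begin{pmatrix}|N|-1\\|A|\\\end{pmatrix}^{-1},
\end{eqnarray}
Eq.(\ref{auxshap}) turns into the first line of Eq.(\ref{12X}). The second line of Eq.(\ref{12X}) follows by collecting the terms according to the cardinality $|A|=0,1,2,\dots$: there are $\begin{pmatrix}|N|-1\\|A|\\\end{pmatrix}$ subsets of $N\setminus\{i\}$ of each given size (so each cardinality class carries a single shared coefficient), and the $|A|=0$ contribution is ${\cal V}(i|\emptyset)=v(i)$ by Eq.(\ref{777AA}). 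This proves that (1) and (2) are equivalent. To round out the interpretation of ${\cal S}(i)$ as the share of player $i$ in $v(N)$, I would also note that for each fixed permutation $\pi$ the sum $\sum_i{\cal V}[i|{\cal A}(\pi;i)]$ telescopes along the order of arrival to $v(N)-v(\emptyset)=v(N)$, so summing Eq.(\ref{120}) over $i$ gives $\sum_i{\cal S}(i)=v(N)$.

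The argument is essentially bookkeeping, so there is no analytic obstacle. The one step that needs genuine care is the permutation count in the first paragraph: one must check that the assignment of a permutation $\pi$ to the ordered pair (the induced ordering of $A$, the induced ordering of $N\setminus(A\cup\{i\})$) is a bijection from $\{\pi:{\cal A}(\pi;i)=A\}$ onto the product of those two sets of orderings, which is what yields the factor $|A|!\,(|N|-1-|A|)!$. After that, the only thing to get right is the clean cancellation of factorials producing the inverse binomial coefficient, and a consistency check that $\sum_{a=0}^{|N|-1}\begin{pmatrix}|N|-1\\a\\\end{pmatrix}a!\,(|N|-1-a)!=|N|!$, confirming that the classes partition all permutations.
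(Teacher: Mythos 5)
Your proposal is correct and follows essentially the same route as the paper: grouping the $|N|!$ permutations by the predecessor set $A={\cal A}(\pi;i)$, counting $|A|!\,(|N|-|A|-1)!$ permutations per class, and simplifying the weight to $\frac{1}{|N|}\begin{pmatrix}|N|-1\\|A|\\\end{pmatrix}^{-1}$. Your added telescoping observation that $\sum_i{\cal S}(i)=v(N)$ is a nice consistency check, though the paper defers that fact to a separate proposition.
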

\begin{proof}
The equivalence of Eq.(\ref{12X}) to Eq.(\ref{120}) is seen from the fact that
for a given $i$ and a given subset $A\subseteq N\setminus \{i\}$, there are 
$|A|!$ permutations in which the labels in $A$ are before $i$, and $(|N|-|A|-1)!$ permutations in which the 
rest $|N|-|A|-1$ labels (i.e., the labels in $N \setminus (A\cup \{i\})$) are after $i$.
Therefore there is a total of $(|N|-|A|-1)!|A|!$ permutations, in which the labels in $A$ are before $i$, and the rest of the labels are after $i$.
This gives the result in Eq.(\ref{12X}).
\end{proof}
\begin{remark}\label{278}
There are many characteristic functions $v(A)$, which lead to the same Shapley values ${\cal S}(i)$.
Indeed, the characteristic function $v(A)$, and the Shapley values ${\cal S}(i)$ 
can be viewed as vectors in ${\mathbb R}^{2^{|N|}-1}$ and ${\mathbb R}^{|N|}$, correspondingly.
For a given ${\cal S}(i)$, Eqs(\ref{12X}) form a system of $|N|$ equations with $2^{|N|}-1$ unknowns.
In section \ref{equiv}, we will introduce equivalence classes of characteristic functions which have the same Shapley values.
\end{remark}

Below we prove three properties of the Shapley values.
Shapley introduced these properties as axioms\cite{S1,S2,S3}, and proved that they lead uniquely to Eqs.(\ref{120}),(\ref{12X}).
\begin{proposition}\label{mnb}
\mbox{}
\begin{itemize}
\item[(1)]
\begin{eqnarray}\label{bbb}
\sum _i{\cal S}(i)=v(N).
\end{eqnarray}
This reflects the fact that the Shapley values are the share of each player from the total value $v(N)$.
We call this equation `resolution of $v(n)$ into parts for the various players\rq{},
because this will help later to make the analogy with the resolution of the identity in terms of the dressed coherent states. 
\item[(2)]
Let $\pi$ be a permutation of the players which maps the player $i$ into $\pi (i)$, and more generaly the subset $A\subseteq N$
into $\pi(A)$. We define permutation transformations that map a characteristic function $v$ into the characteristic function $v_{\pi}$, such that
\begin{eqnarray}\label{perm}
v_{\pi}[\pi(A)]=v(A).
\end{eqnarray}
Here we `jumble\rq{} the  $2^{|N|}-1$ values of $v(A)$, and assign the value of $v(A)$ to the set
$\pi(A)$, which has the same cardinality as $A$.
Then  
\begin{eqnarray}\label{bbbc}
{\cal S}[\pi(i)|v_{\pi}]={\cal S}(i|v).
\end{eqnarray}
\item[(3)]
If $v_1,v_2$ are characteristic functions, then $\lambda _1v_1+\lambda _2 v_2$ is also a characteristic function and
\begin{eqnarray}\label{add}
{\cal S}(i|\lambda _1v_1+\lambda _2 v_2)=\lambda _1{\cal S}(i|v_1)+\lambda _2 {\cal S}(i|v_2);\;\;\;\lambda _1, \lambda _2 \in {\mathbb R}.
\end{eqnarray}

\end{itemize}
\end{proposition}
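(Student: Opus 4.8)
The plan is to prove the three properties directly from the permutation formula \eqref{120}, since that form makes the first two almost immediate and the third completely transparent. For part (1), I would start from $\sum_i {\cal S}(i) = \frac{1}{|N|!}\sum_{\pi}\sum_i {\cal V}[i|{\cal A}(\pi;i)]$ and observe that for a \emph{fixed} permutation $\pi = (j_1,\dots,j_{|N|})$, the inner sum $\sum_i {\cal V}[i|{\cal A}(\pi;i)]$ telescopes: writing the players in the order dictated by $\pi$, the term for $j_k$ is $v(\{j_1,\dots,j_k\}) - v(\{j_1,\dots,j_{k-1}\})$, so summing over $k=1,\dots,|N|$ collapses to $v(\{j_1,\dots,j_{|N|}\}) - v(\emptyset) = v(N)$. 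Since this holds for every $\pi$ and there are $|N|!$ of them, the prefactor $\frac{1}{|N|!}$ cancels and we get $v(N)$.

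For part (2), the symmetry property, I would argue by re-indexing the sum over permutations. Fix $\pi$ and a player $i$; I want to show ${\cal S}[\pi(i)|v_\pi] = {\cal S}(i|v)$. Expanding the left side via \eqref{120} with characteristic function $v_\pi$, the sum runs over all permutations $\rho$ of $N$, with the generic term ${\cal V}_{v_\pi}[\pi(i)|{\cal A}(\rho;\pi(i))] = v_\pi[{\cal A}(\rho;\pi(i))\cup\{\pi(i)\}] - v_\pi[{\cal A}(\rho;\pi(i))]$. Using the defining relation \eqref{perm}, $v_\pi[\pi(B)] = v(B)$, each such term equals $v[\pi^{-1}({\cal A}(\rho;\pi(i)))\cup\{i\}] - v[\pi^{-1}({\cal A}(\rho;\pi(i)))]$. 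Now the key combinatorial observation is that $\pi^{-1}({\cal A}(\rho;\pi(i))) = {\cal A}(\pi^{-1}\rho\pi; i)$: a player $j$ precedes $i$ in the permutation $\pi^{-1}\rho\pi$ exactly when $\pi(j)$ precedes $\pi(i)$ in $\rho$. Hence the term equals ${\cal V}_v[i|{\cal A}(\pi^{-1}\rho\pi;i)]$, and as $\rho$ ranges over all permutations so does $\rho' := \pi^{-1}\rho\pi$, so the sum is exactly $|N|!\,{\cal S}(i|v)/|N|!$, giving the claim.

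For part (3), additivity, I would simply note that the marginal contribution is linear in the characteristic function: ${\cal V}_{\lambda_1 v_1 + \lambda_2 v_2}[i|A] = \lambda_1 {\cal V}_{v_1}[i|A] + \lambda_2 {\cal V}_{v_2}[i|A]$ directly from \eqref{777AA}, because $v\mapsto v[A\cup\{i\}] - v[A]$ is a linear functional of $v$. Plugging this into \eqref{120} and pulling the constants $\lambda_1,\lambda_2$ out of the (finite) sum over permutations gives \eqref{add} immediately. One should also remark that $\lambda_1 v_1 + \lambda_2 v_2$ is again a legitimate characteristic function since it still vanishes on $\emptyset$.

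I expect the only real obstacle to be part (2): one must be careful about the direction of the permutation conjugation and about the fact that the set ${\cal A}(\rho;\pi(i))$ is defined through positions in $\rho$, so the bijection $B \mapsto \pi^{-1}(B)$ between "initial segments of $\rho$ ending just before $\pi(i)$" and "initial segments of $\pi^{-1}\rho\pi$ ending just before $i$" has to be verified at the level of \emph{which players lie in the set}, not just cardinalities. Everything else is bookkeeping: the telescoping in (1) and the linearity in (3) are routine once the permutation formula is in hand, and it is worth stressing that using \eqref{120} rather than \eqref{12X} is what makes (1) clean.
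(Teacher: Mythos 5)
Your argument is sound in substance but takes a genuinely different route from the paper's. The paper works entirely from the subset-averaged formula (\ref{12X}): part (1) is proved there by a combinatorial cancellation (each term $-\binom{|N|-1}{|A|}^{-1}v(A)$, occurring $|N|-|A|$ times, cancels against the $|A|$ terms $\binom{|N|-1}{|A|-1}^{-1}v(A)$ coming from $B=A\setminus\{i\}$, so only $A=N$ survives), and part (2) by the substitution $A=\pi(B)$ inside the subset sum, which needs only $|\pi(B)|=|B|$ and $v_{\pi}[\pi(B)]=v(B)$. You instead work from the permutation formula (\ref{120}); your telescoping proof of (1) (using $v(\emptyset)=0$) is cleaner and arguably more illuminating than the paper's cancellation, and part (3) is the same linearity observation in both treatments. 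What the paper's route buys is that in part (2) the substitution happens at the level of \emph{subsets} rather than orderings, so no positional bookkeeping is needed at all.

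That bookkeeping is where your part (2) has a genuine, though fixable, slip: the re-indexing permutation is not the conjugate $\pi^{-1}\rho\pi$. Writing $\rho$ as the map from players to positions, ``$j$ precedes $i$ in $\pi^{-1}\rho\pi$'' reads $\pi^{-1}(\rho(\pi(j)))<\pi^{-1}(\rho(\pi(i)))$, and the outer $\pi^{-1}$ is not order-preserving, so this is not equivalent to ``$\pi(j)$ precedes $\pi(i)$ in $\rho$'', i.e.\ to $\rho(\pi(j))<\rho(\pi(i))$. The correct transformed permutation is the one-sided composite $\rho\circ\pi$ (equivalently $\pi^{-1}\circ\rho$ if you regard permutations as position-to-player sequences): with that choice one has exactly $\pi^{-1}({\cal A}(\rho;\pi(i)))={\cal A}(\rho\circ\pi;i)$, and since $\rho\mapsto\rho\circ\pi$ is still a bijection of the symmetric group, the remainder of your counting argument, and hence (\ref{bbbc}), goes through unchanged. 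So the idea is right and you correctly identified this as the delicate point, but the conjugation formula as stated is false; replace it by the one-sided composition (or switch to the paper's subset substitution, which sidesteps the issue).
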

\begin{proof}
\mbox{}
\begin{itemize}
\item[(1)]
In order to prove that $\sum _i{\cal S}(i)=v(N)$ we point out that  
\begin{eqnarray}
|N|\sum _i {\cal S}(i)=\sum _i\sum _{A\subseteq N\setminus \{i\}}\begin{pmatrix}|N|-1\\|A|\\\end{pmatrix}^{-1}[v[A\cup\{i\}]-v(A)]
\end{eqnarray}
In this sum the term $-\begin{pmatrix}|N|-1\\|A|\\\end{pmatrix}^{-1}v(A)$ with a fixed set $A \subset N$, appears $|N|-|A|$ times (for all $i\in N\setminus A$).
They cancel the terms  $\begin{pmatrix}|N|-1\\|B|\\\end{pmatrix}^{-1}v(B\cup \{i\})$ where $B=A\setminus \{i\}$ for all $i\in A$ (there are $|A|$ such sets).
Indeed, we get
\begin{eqnarray}
-\begin{pmatrix}|N|-1\\|A|\\\end{pmatrix}^{-1}(|N|-|A|)v(A)+\begin{pmatrix}|N|-1\\|A|-1\\\end{pmatrix}^{-1}|A|v(A)=0
\end{eqnarray}
There is no such cancellation for $A=N$, and these terms give $\sum _i{\cal S}(i)=v(N)$.
\item[(2)]
\begin{eqnarray}\label{777}
{\cal S}[\pi(i)|v_{\pi}]&=&\frac{1}{|N|}\sum _{A\subseteq N\setminus \{\pi(i)\}}\begin{pmatrix}|N|-1\\|A|\\\end{pmatrix}^{-1}[v_{\pi}(A\cup \{\pi(i)\})-v_{\pi}(A)]
\end{eqnarray}
We put $A=\pi(B)$ where $B\subseteq N\setminus \{i\}$. Then $|A|=|\pi(B)|$ and we get
\begin{eqnarray}
{\cal S}[\pi(i)|v_{\pi}]&=&\frac{1}{|N|}\sum _{B\subseteq N\setminus \{i\}}\begin{pmatrix}|N|-1\\|B|\\\end{pmatrix}^{-1}\{v_{\pi}[\pi(B)\cup \{\pi(i)\}]-v_{\pi}[{\pi}(B)]\}\nonumber\\&=&
\frac{1}{|N|}\sum _{B\subseteq N\setminus \{i\}}\begin{pmatrix}|N|-1\\|B|\\\end{pmatrix}^{-1}[v(B\cup \{i\})-v(B)]={\cal S}(i|v).
\end{eqnarray}

\item[(3)]
The Shapley value is a linear function of the $v(A)$, and this proves Eq.(\ref{add}).

\end{itemize}
\end{proof}

\subsection{M\"obius transforms}
M\"obius transforms have been introduced by Rota in combinatorics\cite{R,R1}, as a generalization of the 
`inclusion-exclusion\rq{} principle in set theory, which gives the cardinality of the union of overlapping sets.
It is a method for finding the overlaps, and avoiding the double-counting.
Rota generalized this to partially ordered structures.
The M\"obius transform of the characteristic function $v(A)$ and its inverse, are defined as:
\begin{eqnarray}\label{m14}
&&{\mathfrak d} (B)=\sum _{A\subseteq B} (-1)^{|A|-|B|}v(A);\;\;\;\;A,B\subseteq N\nonumber\\
&&v (A)=\sum _{B\subseteq A}{\mathfrak d} (B).
\end{eqnarray}
They quantify the added value (which might be positive or negative) in the coalitions.
For example:
\begin{eqnarray}\label{m12}
&&{\mathfrak d} (i_1)=v(i_1);\;\;\;{\mathfrak d} (i_1,i_2)=v(i_1,i_2)-v(i_1)-v(i_2)\nonumber\\
&&{\mathfrak d} (i_1,i_2,i_3)=v(i_1,i_2,i_3)-v(i_1,i_2)-v(i_1,i_3)-v(i_2,i_3)+v(i_1)+v(i_2)+v(i_3).
\end{eqnarray}
If for every coalition $A=\{i_1,...,i_k\}$ the $v(A)= v(i_1)+...+v(i_k)$ (i.e., there is no added value), 
then  all the ${\mathfrak d} (B)$ with $|B|\ge 2$, are zero.

\begin{proposition}\label{basis}
\mbox{}
\begin{itemize}
\item[(1)]
Let ${\mathfrak V}_A$ be the characteristic functions defined by
\begin{eqnarray}\label{444}
&&{\mathfrak V}_A(B)=1\;\;{\rm if}\;\; A\subseteq B\nonumber\\
&&{\mathfrak V}_A(B)=0\;\;{\rm otherwise}.
\end{eqnarray}
The set $\{{\mathfrak V}_A|A\subseteq N\}$ is a basis, in the sense that any
characteristic function $v$ can be written as 
\begin{eqnarray}\label{basis}
v (A)=\sum _{C\subseteq N}{\mathfrak d} (C){\mathfrak V}_C(A).
\end{eqnarray}
The coefficients are the M\"obius transforms of $v(A)$ (Eq.(\ref{m14}))
\item[(2)]
The Shapley values of ${\mathfrak V}_A$  are
\begin{eqnarray}\label{4fg}
&&{\cal S}(i|{\mathfrak V}_A)=\frac{1}{|A|}\;\;{\rm if}\;\; i\in A\nonumber\\
&&{\cal S}(i|{\mathfrak V}_A)=0\;\;{\rm if}\;\; i\notin A.
\end{eqnarray}
Therefore the Shapley values of any characteristic function $v$ are:
\begin{eqnarray}\label{1290}
{\cal S}(i)=\sum _{A\ni i}\frac{{\mathfrak d}(A)}{|A|}.
\end{eqnarray}
The summation involves subsets $A$ of $N$, which contain $i$.
The first term is ${\mathfrak d}(i)=v(i)$, and the rest are the added values ${\mathfrak d}(A)$
in all the coalitions that $i$ participates. They are divided by $|A|$ because the added value 
in each coalition, is divided equally among its members.
\end{itemize}
\end{proposition}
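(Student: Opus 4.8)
The plan is to treat the two items separately — M\"obius inversion for item (1), the permutation form of the Shapley value (Eq.(\ref{120})) for item (2) — and then combine them by additivity. For item (1), start from the definition (\ref{444}): ${\mathfrak V}_C(A)=1$ exactly when $C\subseteq A$ and $0$ otherwise. Hence
\[
\sum_{C\subseteq N}{\mathfrak d}(C)\,{\mathfrak V}_C(A)=\sum_{C\subseteq A}{\mathfrak d}(C),
\]
and the right-hand side is $v(A)$ by the inverse M\"obius relation already recorded in the second line of Eq.(\ref{m14}) (the $C=\emptyset$ term contributes nothing, since ${\mathfrak d}(\emptyset)=v(\emptyset)=0$). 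This shows every characteristic function is a linear combination of the ${\mathfrak V}_C$, so $\{{\mathfrak V}_A\mid \emptyset\neq A\subseteq N\}$ spans the $(2^{|N|}-1)$-dimensional space of characteristic functions, and having exactly $2^{|N|}-1$ members it is a basis. For linear independence directly: order the nonempty subsets by increasing cardinality; the matrix with entries ${\mathfrak V}_C(A)$ is then triangular with ones on the diagonal — a nonzero entry forces $C\subseteq A$, hence $|C|\le |A|$, with equality only when $C=A$ — so it is invertible.

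For item (2), fix $i$ and $A$ and apply Eq.(\ref{120}) to $v={\mathfrak V}_A$. For a permutation $\pi$ the marginal contribution ${\cal V}[i|{\cal A}(\pi;i)]={\mathfrak V}_A[{\cal A}(\pi;i)\cup\{i\}]-{\mathfrak V}_A[{\cal A}(\pi;i)]$ takes values in $\{0,1\}$ and equals $1$ precisely when $A\subseteq{\cal A}(\pi;i)\cup\{i\}$ but $A\not\subseteq{\cal A}(\pi;i)$. If $i\notin A$ the two conditions are incompatible, every term is $0$, and ${\cal S}(i|{\mathfrak V}_A)=0$. If $i\in A$, the conditions say exactly that every element of $A\setminus\{i\}$ precedes $i$ in $\pi$, i.e.\ $i$ is the last of the $|A|$ members of $A$ to occur; by symmetry this happens for a fraction $1/|A|$ of the $|N|!$ permutations, so ${\cal S}(i|{\mathfrak V}_A)=1/|A|$. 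The cardinality form Eq.(\ref{12X}) yields the same value: for $i\in A$ only coalitions $A'\subseteq N\setminus\{i\}$ with $A\setminus\{i\}\subseteq A'$ contribute, and summing the weights $\binom{|N|-1}{|A'|}^{-1}$ over them collapses to $1/|A|$ via a short binomial identity.

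Finally, expand $v$ in the basis from item (1) and use additivity of the Shapley value (Eq.(\ref{add})):
\[
{\cal S}(i)=\sum_{C\subseteq N}{\mathfrak d}(C)\,{\cal S}(i\mid{\mathfrak V}_C)=\sum_{C\ni i}\frac{{\mathfrak d}(C)}{|C|},
\]
which is Eq.(\ref{1290}). I expect no conceptual obstacle here; the only friction is bookkeeping — keeping the $\emptyset$-set conventions straight in item (1), and (if one insists on the cardinality form in item (2)) checking the binomial collapse. Routing item (2) through the permutation formula avoids the latter, and that is the path I would take.
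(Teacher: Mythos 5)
Your proof is correct, and it diverges from the paper's in one substantive place. For item (1) both you and the paper obtain the expansion $v(A)=\sum_C {\mathfrak d}(C){\mathfrak V}_C(A)$ by simply rereading the inverse M\"obius relation in Eq.(\ref{m14}); you add a linear-independence argument (triangularity of the matrix ${\mathfrak V}_C(A)$ under ordering by cardinality) that the paper omits, which is a harmless strengthening, and your remark that the $C=\emptyset$ term drops out because ${\mathfrak d}(\emptyset)=0$ is a useful bit of care the paper skips. The real difference is in item (2): the paper does not prove Eq.(\ref{4fg}) at all, but cites Shapley's original derivation from the efficiency and symmetry axioms (Eqs.(\ref{bbb}),(\ref{bbbc})), whereas you prove it directly from the permutation formula Eq.(\ref{120}) by observing that the marginal contribution of $i$ to ${\cal A}(\pi;i)$ under ${\mathfrak V}_A$ is the indicator of the event that $i$ is the last member of $A$ to appear in $\pi$, which by symmetry occurs for a fraction $1/|A|$ of permutations (and never occurs when $i\notin A$). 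This makes your argument self-contained where the paper's is a citation, at the cost of a slightly longer computation; the final step, expanding $v$ in the ${\mathfrak V}_C$ basis and invoking linearity Eq.(\ref{add}) to get Eq.(\ref{1290}), is identical in both. Your parenthetical claim that the cardinality form Eq.(\ref{12X}) "collapses to $1/|A|$ via a short binomial identity" is left unverified, but since you route the proof through the permutation formula this does not create a gap.
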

\begin{proof}
\mbox{}
\begin{itemize}

\item[(1)]
We rewrite the inverse Mobius transform in Eq.(\ref{m14}), as
\begin{eqnarray}
v (A)=\sum _{B\subseteq A}{\mathfrak d} (B)=\sum _{C\subseteq N}{\mathfrak d} (C){\mathfrak V}_C(A).
\end{eqnarray}
\item[(2)]
Eq.(\ref{4fg}) has been proved by Shapley \cite{S1,S2}, using the properties in Eqs(\ref{bbb}),(\ref{bbbc}).
The Shapley value is a linear function of the characteristic function (Eq.(\ref{add})) and therefore
\begin{eqnarray}
{\cal S}(i|v)=\sum _{C\subseteq N}{\mathfrak d} (C){\cal S}[i|{\mathfrak V}_C(A)]=\sum _{C\ni i}\frac{{\mathfrak d} (C)}{|C|}.
\end{eqnarray}
\end{itemize}
\end{proof}
\begin{cor}
If for every coalition $A=\{i_1,...,i_k\}$ the $v(A)= v(i_1)+...+v(i_k)$, then ${\cal S}(i)=v(i)$.
\end{cor}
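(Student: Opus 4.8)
The plan is to derive this directly from the material already in hand, and I would actually give two routes, presenting the more self-contained one as the main argument. The short route uses Proposition \ref{basis}(2): recall the observation made just before that proposition, that the additivity hypothesis $v(A)=v(i_1)+\dots+v(i_k)$ forces every M\"obius transform ${\mathfrak d}(B)$ with $|B|\ge 2$ to vanish. This is itself a one-line induction on $|B|$ from $v(A)=\sum_{B\subseteq A}{\mathfrak d}(B)$ in Eq.(\ref{m14}): for $|B|=1$ one has ${\mathfrak d}(i)=v(i)$, and if the claim holds up to cardinality $k-1$, then expanding $v$ on a $k$-element set $A$ gives $\sum_{i\in A}v(i)=v(A)=\sum_{i\in A}{\mathfrak d}(i)+{\mathfrak d}(A)$, hence ${\mathfrak d}(A)=0$. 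Plugging this into Eq.(\ref{1290}), the only surviving term of $\sum_{A\ni i}{\mathfrak d}(A)/|A|$ is the singleton $A=\{i\}$, contributing ${\mathfrak d}(i)/1=v(i)$, so ${\cal S}(i)=v(i)$.

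For a proof that does not invoke the M\"obius machinery, I would argue straight from Eq.(\ref{12X}). Under the hypothesis, the marginal contribution satisfies ${\cal V}(i|A)=v(A\cup\{i\})-v(A)=v(i)$ for every $A\subseteq N\setminus\{i\}$, and also ${\cal V}(i|\emptyset)=v(i)$ by the convention in Eq.(\ref{777AA}). Substituting into the first line of Eq.(\ref{12X}) lets one pull the constant $v(i)$ out of the sum, leaving ${\cal S}(i)=\frac{v(i)}{|N|}\sum_{A\subseteq N\setminus\{i\}}\binom{|N|-1}{|A|}^{-1}$. Grouping the subsets of $N\setminus\{i\}$ by cardinality, there are $\binom{|N|-1}{k}$ of them with $|A|=k$, so the sum collapses to $\sum_{k=0}^{|N|-1}\binom{|N|-1}{k}\binom{|N|-1}{k}^{-1}=|N|$, and the two factors of $|N|$ cancel, giving ${\cal S}(i)=v(i)$.

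There is essentially no obstacle here: the statement is a direct corollary of Proposition \ref{basis} (or of the Shapley formula itself). The only steps needing a line of care are the vanishing of the higher-order M\"obius transforms on the first route, and the elementary counting identity $\sum_{k=0}^{n}\binom{n}{k}\binom{n}{k}^{-1}=n+1$ on the second. I would write the second route as the displayed proof and remark that the first is immediate from Eq.(\ref{1290}); it also fits the narrative, since the point being illustrated is precisely that when there is no added value in any coalition, dressing does nothing and the Shapley value reduces to the bare contribution $v(i)$.
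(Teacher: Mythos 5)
Your proposal is correct. Your ``short route'' is precisely the paper's own proof: the paper observes (in the sentence following Eq.(\ref{m12})) that additivity forces ${\mathfrak d}(B)=0$ for all $|B|\ge 2$ and then cites Eq.(\ref{1290}); you usefully supply the one-line induction on $|B|$ from $v(A)=\sum_{B\subseteq A}{\mathfrak d}(B)$ that the paper leaves implicit. The proof you actually choose to display, however, is a genuinely different and more elementary route: it bypasses the M\"obius machinery entirely and works directly from the averaging formula Eq.(\ref{12X}), using only that the additivity hypothesis makes every marginal contribution ${\cal V}(i|A)=v(A\cup\{i\})-v(A)$ equal to the constant $v(i)$, together with the trivial count that there are $\binom{|N|-1}{k}$ subsets of $N\setminus\{i\}$ of each cardinality $k$, so the weights sum to $|N|$. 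What the direct route buys is self-containedness: it does not lean on Eq.(\ref{4fg}) (the Shapley values of the basis games ${\mathfrak V}_A$, which the paper attributes to Shapley without proof) and hence not on Eq.(\ref{1290}). What the M\"obius route buys is the conceptual point the paper is building toward, namely that the dressing ${\cal S}(i)-v(i)$ is exactly $\sum_{A\ni i,\,|A|\ge 2}{\mathfrak d}(A)/|A|$, the equally shared added values, which vanish term by term when there is no added value; this is the reading that transfers to the quantum statement $\sigma(i)=\Pi(i)$ for orthonormal sets. Both arguments are sound; the paper simply chose the latter.
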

\begin{proof}
The $v(A)= v(i_1)+...+v(i_k)$, implies that all the ${\mathfrak d} (B)$ with $|B|\ge 2$, are zero. Then Eq.(\ref{1290}) proves that ${\cal S}(i)=v(i)$.
\end{proof}
Physically, the additivity relation $v(A)= v(i_1)+...+v(i_k)$ means that there is no added value in the coalitions, and we get
\begin{eqnarray}
{\cal S}(i)={\cal V}(i|A)=v(i).
\end{eqnarray}
The analogue of this in a quantum context, would be to have an orthogonal basis so that the
projector corresponding to an aggregation of vectors in the basis, is equal to the sum of the projectors for the various vectors.

\begin{remark}
Our aim is to transfer these ideas in a quantum context, and for this reason we introduce some `quantum terminology'.
We call $v(i)$ the `bare' worth of the player $i$, which through the interaction with the other players is 
`dressed' or `renormalized' into the Shapley 
value $S(i)$. This is analogous to `bare electrons' in quantum field theory, which are `dressed' or `renormalized'
through the interaction with photons.
\end{remark}
\subsection{Example}\label{100}
In a factory workers $1,2,3$ working individually, produce $1,0,2$ items of the same product (per hour), correspondingly. 
Worker $2$ has a very special expertise and cannot produce a full item on his own, but he can contribute in collaborations.
The collaboration of $(1,2)$ produces $4$  items, 
the collaboration of $(1,3)$ produces $3$  items,
and the collaboration of $(2,3)$ produces $6$  items.
Furthermore the collaboration of $(1,2,3)$ produces $8$  items.
In this case
\begin{eqnarray}
&&v(\emptyset)=0;\;\;\;v(1)=1;\;\;\;v(2)=0;\;\;\;v(3)=2\nonumber\\
&&v(1,2)=4;\;\;\;v(1,3)=3;\;\;\;v(2,3)=6;\;\;\;v(1,2,3)=8.
\end{eqnarray}
The contribution of various players to various coalitions is
\begin{eqnarray}
&&{\cal V}(1|\emptyset)=1;\;\;\;{\cal V}(1|2)=4;\;\;\;{\cal V}(1|3)=1;\;\;\;{\cal V}(1|2,3)=2\nonumber\\
&&{\cal V}(2|\emptyset)=0;\;\;\;{\cal V}(2|1)=3;\;\;\;{\cal V}(2|3)=4;\;\;\;{\cal V}(2|1,3)=5\nonumber\\
&&{\cal V}(3|\emptyset)=2;\;\;\;{\cal V}(3|1)=2;\;\;\;{\cal V}(3|2)=6;\;\;\;{\cal V}(3|1,2)=4.
\end{eqnarray}
We calculate the Shapley values using the three formulas in Eqs.(\ref{120}),(\ref{12X}),(\ref{1290}).

In order to use Eq.(\ref{120}), we consider the $6$ permutations of $1,2,3$ and we get
\begin{eqnarray}\label{353}
{\cal S}(1)=\frac{1}{6}\left[v(1)+v(1)+{\cal V}(1|2)+{\cal V}(1|2,3)+{\cal V}(1|3)+{\cal V}(1|3,2)\right ]=\frac{11}{6},
\end{eqnarray}
and also
\begin{eqnarray}\label{353a}
{\cal S}(2)=\frac{17}{6};\;\;\;{\cal S}(3)=\frac{10}{3}.
\end{eqnarray}

We also use Eq.(\ref{12X}), and we get 
\begin{eqnarray}\label{353}
{\cal S}(1)=\frac{1}{3}\left[1+\frac{1}{2}(4+1)+2\right ]=\frac{11}{6}.
\end{eqnarray}
In a similar way we get the values for $S(2), S(3)$ given in Eq.(\ref{353a}).

We also calculate the Shapley values using Eq.(\ref{1290}).
The M\"obius transform of the characteristic function is
\begin{eqnarray}
&&{\mathfrak d}(1)=1;\;\;\;{\mathfrak d}(2)=0;\;\;\;{\mathfrak d}(3)=2\nonumber\\
&&{\mathfrak d}(1,2)=3;\;\;\;{\mathfrak d}(1,3)=0;\;\;\;{\mathfrak d}(2,3)=4;\;\;\;{\mathfrak d}(1,2,3)=-2,
\end{eqnarray}
and it gives
\begin{eqnarray}
{\cal S}(1)={\mathfrak d}(1)+\frac{1}{2}[{\mathfrak d}(1,2)+{\mathfrak d}(1,3)]+\frac{1}{3}{\mathfrak d}(1,2,3)=\frac{11}{6}.
\end{eqnarray}
Here the first term is ${\mathfrak d}(1)=v(1)$.
The second term is the added values in the coalitions $(1,2)$ and $(1,3)$ divided by $2$, because each of them has two players.
The third term is the added value in the coalition $(1,2,3)$ (which is negative) divided by $3$, because it has three players.
In a similar way we get the values for $S(2), S(3)$ given in Eq.(\ref{353a}).

It is seen that the `bare' worth of worker $1$ is $v(1)=1$, and through the interaction with the other players it is 
`dressed' or `renormalized' into $S(1)=11/6$.
Similar `dressing' occurs for the other players.

\section{Analogies between cooperative game theory and aggregations of coherent states}\label{IV}

In this section we establish a precise correspondence between cooperative game theory and aggregations of coherent states.
In order to do this, we first discuss  some quantities in a quantum context which are analogous to those in cooperative game theory.
They provide mathematical and physical insight to the non-orthogonality and non-commutativity of coherent projectors,
by studying not only the individual role of coherent states in the formalism, but also the role of aggregations of coherent states.

\subsection{M\"obius operators: quantifying the non-orthogonality and non-commutativity of coherent projectors }\label{IVA}

We have discussed M\"obius transforms in a different context in refs\cite{v16a,v16b},
and here we give briefly the relevant formulas. 
The M\"obius transform of the coherent projectors $\Pi(A)$, is given by:
\begin{eqnarray}\label{m11}
{\mathfrak D} (B)=\sum _{A\subseteq B} (-1)^{|A|-|B|}\Pi(A);\;\;\;\;A,B\subseteq\Omega.
\end{eqnarray}
Some examples are:
\begin{eqnarray}\label{m12}
&&{\mathfrak D} (1)=\Pi(1);\;\;\;{\mathfrak D} (1,2)=\Pi(1,2)-\Pi(1)-\Pi(2)\nonumber\\
&&{\mathfrak D} (1,2,3)=\Pi(1,2,3)-\Pi(1,2)-\Pi(1,3)-\Pi(2,3)+\Pi(1)+\Pi(2)+\Pi(3).
\end{eqnarray}
The inverse M\"obius transform is
\begin{eqnarray}\label{m13}
\Pi (A)=\sum _{B\subseteq A}{\mathfrak D} (B).
\end{eqnarray}
We note that if instead of coherent states we use an orthonormal set of states,
then the relation in Eq.(\ref{123}) becomes equality, and all the ${\mathfrak D} (B)$ with $|B|\ge 2$, are zero.

We have explained in \cite{v16a,v16b} that:
\begin{itemize}
\item
M\"obius transforms give the `added value' in an aggregation of coherent states, and they are intimately related 
to the non-orthogonal nature of coherent states.
For example, the ${\mathfrak D} (1,2)$ shows the difference of the projector $\Pi(1,2)$ 
of the aggregation of the coherent states labelled with $1,2$, from the sum $\Pi(1)+\Pi(2)$.
For orthogonal projectors ${\mathfrak D} (1,2)=0$.
\item
M\"obius transforms are intimately related to commutators that involve coherent projectors, e.g.,
\begin{eqnarray}\label{375}
&&[\Pi(i),\Pi(j)]={\mathfrak D} (i,j)[\Pi(i)-\Pi(j)]\nonumber\\
&&[[\Pi(i), \Pi(k)], \Pi(j)]=\Pi(j){\mathfrak D} (i,j,k)[\Pi(i)-\Pi(k)]+[\Pi(i)-\Pi(k)]{\mathfrak D} (i,j,k)\Pi(j).
\end{eqnarray}
Therefore, using M\"obius transforms, is equivalent to taking into account non-commutativity effects.
\end{itemize}
These comments are general results \cite{v16a} for projectors to any subspaces of the Hilbert space.
In most of the paper we are interested in the coherent projectors, but in section \ref{gen} we will also use the more general case.

\subsection{Contribution of a coherent state to an aggregation}\label{IVB}

Let $A \subseteq \Omega$, and $i\in \Omega \setminus A$. The projector (of rank one) 
\begin{eqnarray}\label{120n}
\varpi (i|A)=\Pi(A\cup \{i\})-\Pi(A)\ne \Pi(i);\;\;\;\;\varpi (i|\emptyset)=\Pi(i),
\end{eqnarray}
quantifies the contribution of the coherent state $\ket{C;i}$ to the aggregation of coherent states labeled with $A\cup \{i\}$.
Using the Gram-Schmidt orthogonalization algorithm, we can express $\varpi (i|A)$ as in Eq.(\ref{gg}):
\begin{eqnarray}\label{ggc}
&&\varpi(i|A)=
\frac{\Pi^{\perp}(A)\Pi(i)\Pi^{\perp}(A)}{{\rm Tr}[\Pi^{\perp}(A)\Pi(i)]};\;\;\;\Pi^{\perp}(A)={\bf 1}-\Pi (A);\;\;\;i\in \Omega \setminus A.
\end{eqnarray}
Examples of this, are: 
\begin{eqnarray}\label{gg}
&&\Pi (1,2)=\Pi(1)+\varpi(2|1);\;\;\;
\varpi(2|1)=\frac{\Pi^{\perp}(1)\Pi(2)\Pi^{\perp}(1)}{{\rm Tr}[\Pi^{\perp}(1)\Pi(2)]}\nonumber\\
&&\Pi (1,2,3)=\Pi(1,2)+\varpi(3|1,2);\;\;\;
\varpi(3|1,2)=\frac{\Pi^{\perp}(1,2)\Pi(3)\Pi^{\perp}(1,2)}{{\rm Tr}[\Pi^{\perp}(1,2)\Pi(3)]}
\end{eqnarray}
etc.

Using Eqs(\ref{45}),(\ref{44}), we can prove\cite{v16a} the following coherence properties for $\varpi [(\alpha, \beta)|A]$
with fixed set $A$ (we use here the pair of indices notation):
\begin{eqnarray}\label{58}
&&\frac{1}{d}\sum _{\gamma, \delta}\varpi [(\alpha+\gamma, \beta +\delta)|A+(\gamma, \delta)]={\bf 1};\;\;\;(\alpha ,\beta)\in \Omega \setminus A,
\end{eqnarray}
and
\begin{eqnarray}\label{27}
&&D(\gamma, \delta)\varpi [(\alpha, \beta)|A]D^{\dagger}(\gamma, \delta)=
\varpi [(\alpha+\gamma, \beta+\delta)|A+(\gamma, \delta)]\nonumber\\
&&A\subseteq \Omega \setminus \{(\alpha ,\beta)\};\;\;\;\;A+(\gamma, \delta)\subseteq \Omega \setminus \{(\alpha +\gamma,\beta +\delta)\}.
\end{eqnarray}
Also since we use generic fiducial vectors, in the case $|A|\ge d$ we get $\Pi(A)={\bf 1}$, and therefore
\begin{eqnarray}\label{120a}
|A|\ge d\;\;\rightarrow\;\;\varpi (i|A)=0.
\end{eqnarray}
Eqs.(\ref{58}),(\ref{27}) show that the $\varpi [(\alpha, \beta)|A]$ are coherent density matrices, according to the definition \ref{def}.
In fact they are generalized coherent projectors of rank one, in the sense that for $A=\emptyset$ they reduce to the coherent projectors in section \ref{vv}.

If $A,B$ are two subsets of $\Omega$, and $i\in \Omega \setminus (A\cup B)$, then
\begin{eqnarray}\label{120b}
\varpi (i|A)\ne \varpi (i|B).
\end{eqnarray}
The contribution of $i$ to the aggregation $A\cup \{i\}$, is different from 
the contribution of $i$ to the aggregation $B\cup \{i\}$.

We can also define $Q$ functions with respect to the $\varpi (i|A)$ as:
\begin{eqnarray}
 q_{\theta}(i|A)=\frac{1}{d} {\rm Tr}[\theta \varpi (i|A)]=Q(A\cup \{i\}|\theta)-Q(A|\theta);\;\;\;i\in \Omega \setminus A.
\end{eqnarray}
Eq.(\ref{58}) implies that
\begin{eqnarray}
&&\frac{1}{d}\sum _{\gamma, \delta}q_{\theta}[(\alpha+\gamma, \beta +\delta)|A+(\gamma, \delta)]={\rm Tr}(\theta);\;\;\;(\alpha ,\beta)\in \Omega \setminus A.
\end{eqnarray}

\subsection{Correspondence between cooperative game theory and aggregations of coherent states}
The correspondence between quantities in cooperative game theory, and quantities in a quantum context, is as follows.
We introduce a bijective map between the set $\Omega$ of $d^2$ labels for coherent states, and the set $N$ of $|N|=d^2$ players:
\begin{eqnarray}
\Omega\;\rightarrow\;N;\;\;\;|N|=d^2.
\end{eqnarray}
For simplicity, we use the same notation for a subset $A$ of $\Omega$, and the corresponding subset of $N$.
For a Hermitian operator $\theta$, we call
`corresponding game' the one  with characteristic value 
\begin{eqnarray}\label{12C}
v(A)=dQ(A|\theta)={\rm Tr}[\theta \Pi (A)],
\end{eqnarray}
for all subsets $A$.
Different Hermitian operators $\theta $ correspond to different cooperative games $v$.

$dQ(i|\theta)$ can be viewed as the `worth\rq{} of the coherent state $\ket{C;i}$ in the description of the Hermitian operator $\theta$, analogous to the worth $v(i)$ of player $i$ in cooperative game theory. 
For example, if $\theta $ is a density matrix, a measurement with the projector $\Pi(i)=\ket{C;i}\bra{C;i}$ will give `yes\rq{} with probability
${\rm Tr}[\theta \Pi (i)]=dQ(i|\theta)$.
More generally $Q(A|\theta)$ can be viewed as the `worth\rq{} of the aggregation $A$ of coherent states for the description of the operator $\theta$, analogous to 
the worth $v(A)$ of players in the coalition $A$, in the game $v$.

If $\theta _1, \theta _2$ are Hermitian operators and $v_1,v_2$ the corresponding cooperative games
(i.e., $dQ(A|\theta _1)=v_1(A)$ and $dQ(A|\theta _2)=v_2(A)$ for all subsets $A$), then it is easily seen that
\begin{eqnarray}
dQ(A|\lambda _1\theta _1+\lambda _2 \theta _2)=[\lambda _1v_1+\lambda _2v_2](A);\;\;\;\lambda _1, \lambda _2\in {\mathbb R}.
\end{eqnarray}

\subsection{Embedding of the set ${\Sigma _Q}$ of physical $Q$-functions into the larger set ${\Sigma _{VQ}}$ of virtual $Q$-functions}

Sometimes it is helpful to work in a larger set than the `physical set' of the problem, and at the end 
of the calculation, we can restrict the results into the physical set. This is the case if for the larger set, there are many known results 
in a different context, which are readily available (in our case in the context of cooperative game theory). 

The map of Eq.(\ref{12C}) is an `into function'.
There are many characteristic functions with no analogous $Q$-function.
We have explained earlier (Eq.(\ref{QQQ})) that the $d^2$ values of $Q(i|\theta)$ determine uniquely the $Q(A|\theta)$.
In contrast, the $|N|=d^2$ values of $v(i|\theta)$ do not define  the $v(A|\theta)$.
The $Q$ function of a Hermitian operator has $d^2$ degrees of freedom, while the 
characteristic function has $2^{|N|}-1=2^{d^2}-1$ degrees of freedom.

In order to transfer results from cooperative game theory, in the context of aggregations of coherent states, 
it is helpful to have a bijective map between the set of $Q$-functions and the set of characteristic functions.
For this reason we enlarge the set ${\Sigma _Q}$ of `physical' $Q$-functions, 
into the ${\Sigma}_{VQ}$ of `virtual' $Q$-functions. They are functions that assign a real value $Q(A)$ to each aggregation of coherent states described 
by the subset $A\subseteq \Omega$:
\begin{eqnarray}\label{game1}
A\;\rightarrow \;Q(A);\;\;\;Q(\emptyset)=0.
\end{eqnarray}
The term `virtual' means that  we discard the restriction that the $d^2$ values of $Q(i|\theta)$ define uniquely
the $Q(A|\theta)$.    
The $Q$-functions in the set ${\Sigma }_{VQ}\setminus {\Sigma _Q}$ are `non-physical', and they have no 
corresponding operator $\theta$ (for this reason we omit $\theta$ in the notation $Q(A)$).
The definition in Eq.(\ref{game1}) is analogous to the definition of the characteristic function in 
Eq.(\ref{game}). 

It is clear that we have a bijective map from the set of cooperative games $\Sigma _G$ to ${\Sigma _{VQ}}$,
with the $v(A)$ corresponding to $dQ(A)$.
The sets $\Sigma _G$ and ${\Sigma _{VQ}}$ are isomorphic ($\Sigma _G \simeq {\Sigma _{VQ}}$).
Therefore we can transfer results from cooperative game theory to the set ${\Sigma _{VQ}}$ of virtual $Q$ functions, and then restrict ourselves to the physical $Q$ functions.

\begin{remark}
If $\theta$ is a density matrix, all values of $Q(A|\theta)$ can be measured in yes/no experiments with the projectors $\Pi(A)$.
The $\Pi(A)$ do not commute and different ensembles describing the same density matrix $\theta$ need to be used.
Ideal measurements will satisfy Eq.(\ref{QQQ}), and will belong in ${\Sigma _{Q}}$, but real (noisy) measurements will 
belong in ${\Sigma _{VQ}}$. 
In the latter case Eq.(\ref{QQQ}) is an incompatible system of more than $d$ equations, with $d$ unknowns (the $\theta_j)$.
Computer libraries, have programs which give solutions to such systems, with minimum error.
Such a solution is probably more accurate than the one obtained from measurements of the $d$ values of $Q(i|\theta)$ only.
The redundancy in having all the values of $Q(A|\theta)$ is desirable, because it can lead to error correction.
In this sense, the space ${\Sigma _{VQ}}$ can be useful in the study of noisy measurements of $Q$-functions, but we do not pursue this direction. 
\end{remark}

\subsection{Permutation transformations in ${\Sigma _{VQ}}$}

We consider permutation transformations analogous to those in Eq.(\ref{perm}).
Let $\pi$ be a permutation of the $d^2$ elements in $\Omega$. 
We define the function $Q_{\pi}$ as
\begin{eqnarray}\label{2c7}
Q_{\pi}[\pi(A)]=Q(A).
\end{eqnarray}
Here we `jumble\rq{} the  $2^{d^2}-1$ values of $Q(A)$,
taking into account the constraint that the sets $A$ and $\pi(A)$ have the same cardinality.

\begin{proposition}\label{rfv}
For a physical $Q$-function $Q(A|\theta)$ in ${\Sigma _{Q}}$, the corresponding $Q_{\pi}$-function will (in general) be  a
virtual $Q$-function in ${\Sigma _{VQ}}$.
\end{proposition}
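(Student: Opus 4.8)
The plan is to show that a generic permutation $\pi$ destroys the structural constraint that characterizes the physical subset ${\Sigma_Q}\subset{\Sigma_{VQ}}$, namely the requirement (encoded in Eq.(\ref{QQQ})) that the $d^2$ values $Q(i)$, $i\in\Omega$, determine all the remaining $2^{d^2}-1-d^2$ values $Q(A)$ with $|A|\ge 2$ via the trace formula $Q(A|\theta)=\frac{1}{d}{\rm Tr}[\Pi(A)\theta]$ with $\theta$ reconstructed from the singletons. First I would fix a physical $Q$-function $Q(\cdot|\theta)$ with $\theta$ generic, and record that the singleton values $Q(i|\theta)=\frac{1}{d}{\rm Tr}[\Pi(i)\theta]$ together with the invertibility of ${\mathfrak M}_{ij}$ pin down $\theta$, hence pin down every $Q(A|\theta)$. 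For $Q_\pi$ to be physical it would have to come from some Hermitian operator $\theta'$, i.e. $Q_\pi(A)=\frac{1}{d}{\rm Tr}[\Pi(A)\theta']$ for all $A$; in particular $\theta'$ would be forced by the singleton equations $Q_\pi(i)=\frac{1}{d}{\rm Tr}[\Pi(i)\theta']$.

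The key step is a counting/consistency argument. Since $Q_\pi[\pi(A)]=Q(A|\theta)$, the singleton values of $Q_\pi$ are $Q_\pi(\pi(i))=Q(i|\theta)$; equivalently, writing $j=\pi(i)$, $Q_\pi(j)=Q(\pi^{-1}(j)|\theta)$. By Eq.(\ref{xxxz})–(\ref{QQQ}) this determines a unique candidate operator $\theta'=\theta'(\pi,\theta)$, namely $\theta'_k=\sum_l({\mathfrak M}^{-1})_{lk}Q_\pi(l)=\sum_l({\mathfrak M}^{-1})_{lk}Q(\pi^{-1}(l)|\theta)$. Then $Q_\pi\in{\Sigma_Q}$ would force, for every $A$ with $2\le|A|\le d-1$,
\begin{eqnarray}
\frac{1}{d}{\rm Tr}[\Pi(A)\theta']=Q_\pi(A)=Q(\pi^{-1}(A)|\theta)=\frac{1}{d}{\rm Tr}[\Pi(\pi^{-1}(A))\theta].
\end{eqnarray}
This is an enormously overdetermined system once $d\ge 3$: there are $2^{d^2}-1-d^2$ such constraints on the $d^2$ free parameters of $\theta$ (for fixed $\pi$), and they are generically independent because the projectors $\Pi(A)$ and $\Pi(\pi^{-1}(A))$ are built from genuinely different aggregations of the (non-orthogonal, linearly dependent) coherent states, so the corresponding linear functionals on the space of Hermitian operators are generically distinct. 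Hence the set of $\theta$ for which $Q_\pi$ happens to stay physical is contained in a proper algebraic subvariety; for a generic $\theta$ (and a non-trivial $\pi$ that actually mixes cardinalities nontrivially) $Q_\pi\notin{\Sigma_Q}$, which is exactly the ``in general'' in the statement.

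To make this concrete rather than merely generic, the cleanest route is an explicit small example: take $d=3$, pick a transposition $\pi$ swapping two singletons, say $\pi=(1\,2)$, and a diagonal Hermitian $\theta$ in the position basis with distinct eigenvalues. One computes $Q(1|\theta),Q(2|\theta)$ and notes $Q_\pi$ swaps these two singleton values while leaving all other singleton values fixed; the induced operator $\theta'$ differs from $\theta$ only through this swap, yet $Q_\pi(1,2)$ is still forced to equal $Q(1,2|\theta)$ (since $\pi^{-1}(\{1,2\})=\{1,2\}$), whereas $\frac{1}{d}{\rm Tr}[\Pi(1,2)\theta']\ne Q(1,2|\theta)$ because $\Pi(1,2)$ is not the sum $\Pi(1)+\Pi(2)$ (Eq.(\ref{123})) and so it ``sees'' the redistribution of weight between states $1$ and $2$. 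Exhibiting one such numerical mismatch proves $Q_\pi\notin{\Sigma_Q}$ and suffices. I expect the main obstacle to be bookkeeping: one must verify that the M\"obius/aggregation structure genuinely registers the permutation, i.e. that the inequality $\Pi(A)\ne\sum_{i\in A}\Pi(i)$ has enough ``off-diagonal content'' (controlled by the ${\mathfrak D}(B)$ of Eq.(\ref{m12})) that no accidental cancellation restores physicality — which is precisely where genericity of the fiducial vector is used.
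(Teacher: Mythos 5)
Your proposal is correct and follows essentially the same route as the paper: the permuted singleton values $Q_\pi(i)$ determine a unique candidate operator via the invertible $d^2\times d^2$ system, and physicality of $Q_\pi$ would then impose the remaining (vastly overdetermined) constraints for $|A|\ge 2$, which fail for a generic $\theta$ and a generic permutation. The concrete $d=3$ example you sketch is a nice optional supplement, but the counting/overdetermination argument is exactly the paper's proof (the paper additionally remarks that special permutations coming from automorphisms of the Heisenberg--Weyl group do preserve physicality).
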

\begin{proof}
We consider 
the $Q$-function $Q(i|\theta)$ of a Hermitian operator $\theta$.
We define an operator $\theta _{\pi}$, such that 
\begin{eqnarray}
Q[\pi(i)|\theta _{\pi}]=Q(i|\theta).
\end{eqnarray}
For a given $Q(i|\theta)$, Eq.(\ref{2c7}) with $|A|=1$ is a system of $d^2$ equations with $d^2$ unknowns (the matrix elements of $\theta _{\pi}$).
So we can calculate $\theta _{\pi}$, but then in general 
\begin{eqnarray}\label{2c8}
Q[\pi(A)|\theta _{\pi}]\ne Q(A|\theta).
\end{eqnarray}
To get equality in this equation with given $Q(A|\theta)$, requires to satisfy a system of $2^{d^2}-1$ equations 
(one for each non-empty subset $A$ of $\Omega$), with $d^2$ unknowns  (the matrix elements of $\theta _{\pi}$).
Therefore Eq.(\ref{2c8}) is not an equality, in general. It follows that in Eq.(\ref{2c7}), the $Q_{\pi}$-function will (in general) be  a
virtual $Q$-function. 

Only special permutations (e.g., $(\alpha, \beta)\rightarrow (\alpha+1, \beta)$) lead to automorphisms of the Heisenberg-Weyl group, and to
physical $Q_{\pi}$-functions.
\end{proof}

\subsection{Restrictions on $\Sigma _G \simeq {\Sigma _{VQ}}$}

The following remarks reduce the size of the sets $\Sigma _G \simeq {\Sigma _{VQ}}$, and this can be helpful in practical calculations:
\begin{itemize}

\item[(1)]
For $|A|\ge d$ we get $dQ(A)={\rm Tr}(\theta)$ (Eq.(\ref{g7})). Therefore
we need to consider cooperative games with characteristic functions such that 
\begin{eqnarray}\label{wq}
|A|\ge d\;\rightarrow\;v(A)={\rm Tr}(\theta).
\end{eqnarray}
The number of degrees of degrees of freedom in these games is reduced from $2^{|N|}-1=2^{d^2}-1$ to
\begin{eqnarray}\label{wq1}
{\mathfrak n}_d=\begin{pmatrix}d^2\\1\\\end{pmatrix}+\begin{pmatrix}d^2\\2\\\end{pmatrix}+...+\begin{pmatrix}d^2\\d-1\\\end{pmatrix}
\end{eqnarray}
\item[(2)]
The characteristic function $v(A)$ is called superadditive, if
\begin{eqnarray}\label{28}
A_1\cap A_2=\emptyset\;\rightarrow \;v(A_1)+v(A_2)\le v(A_1\cup A_2).
\end{eqnarray}
This means that the `worth' of a coalition is at least equal to its parts acting separately.
The analogue of Eq.(\ref{28}) in a quantum context, is not valid. Indeed if
$A_1\cap A_2=\emptyset$ the $\Pi(A_1\cup A_2)-\Pi(A_1)-\Pi(A_2)$ is not a positive semi-definite operator.
Assuming that $|A_1 \cup A_2|\le d$, we get
\begin{eqnarray}
{\rm Tr}[\Pi(A_1\cup A_2)-\Pi(A_1)-\Pi(A_2)]=(|A_1|+|A_2|)-|A_1|-|A_2|=0.
\end{eqnarray}
which shows that it cannot have all its eigenvalues non-negative.
Consequently $Q(A_1|\theta)+Q(A_2|\theta)$ might be smaller or bigger than $Q(A_1\cup A_2|\theta)$.
For this reason, we consider cooperative games with characteristic function which is not superadditive.
We mention this because there are many results in the cooperative games literature,
which assume superadditive characteristic functions, and which are 
not relevant here.
\item[(3)]
The characteristic function $v(A)$ is called monotonic, if
\begin{eqnarray}\label{27A}
A_1\subseteq A_2\;\rightarrow \;v(A_1)\le v(A_2).
\end{eqnarray}
This means that adding players to a coalition, cannot `harm' the coalition.
Monotonic characteristic functions, take non-negative values 
\begin{eqnarray}
v(N)\ge v(A)\ge v(\emptyset)=0.
\end{eqnarray}
For a positive semi-definite operators $\theta$ we get $Q(i|\theta)\ge 0$, and therefore 
the characteristic function of the corresponding cooperative game,
 should obey the monotonicity property in Eq.(\ref{27A}).
\end{itemize}
It is seen that coherence in finite quantum systems is analogous to cooperative game theory with characteristic function that obeys Eq.(\ref{wq}), 
and does not obey the superadditivity property in Eq.(\ref{28}).
We note that for positive semi-definite operators $\theta$ (e.g., density matrices),
the characteristic function of the corresponding cooperative game, should obey the monotonicity property in Eq.(\ref{27A}).

With these remarks, we have reduced the size of the sets $\Sigma _G \simeq {\Sigma _{VQ}}$, 
and this can be helpful in practical calculations.

\section{ Shapley values as dressed $Q$-functions }\label{V}

In this section, we define Shapley values $S(i)$ for the $Q$-functions.
We first define briefly the quantum analogues of some quantities which we used in cooperative games.

The analogue of ${\cal V}(i|A)$ in Eq.(\ref{777AA}), is
\begin{eqnarray}
q(i|A)=Q(A\cup \{i\})-Q(A);\;\;\;i\in \Omega;\;\;A\subseteq \Omega -\{i\}.
\end{eqnarray}
The M\"obius transform of the $Q$-functions are defined in analogous way to Eqs(\ref{m14}):
\begin{eqnarray}\label{m15}
\Delta(B)=\sum _{A\subseteq B} (-1)^{|A|-|B|}Q(A);\;\;\;Q (A)=\sum _{B\subseteq A}\Delta (B).
\end{eqnarray}
In the case of $Q$-functions in ${\Sigma _{Q}}\subset {\Sigma _{VQ}}$, there exist operator $\theta$ such that
\begin{eqnarray}
&&q(i|A)=Q(A\cup \{i\}|\theta)-Q(A|\theta)=\frac{1}{d} {\rm Tr}[\theta \varpi (i|A)]\nonumber\\
&&\Delta(B)=\frac{1}{d}\sum _{A\subseteq B} (-1)^{|A|-|B|}{\rm Tr}[\theta \Pi (A)]=\frac{1}{d}{\rm Tr}[\theta {\mathfrak D} (B)]
\end{eqnarray}

We also define in ${\Sigma}_{VQ}$, the analogue of the characteristic functions 
in Eq.(\ref{444}), as
\begin{eqnarray}\label{97}
&&d{\mathfrak Q}_A(B)=1\;\;{\rm if}\;\; A\subseteq B\nonumber\\
&&d{\mathfrak Q}_A(B)=0\;\;{\rm otherwise}.
\end{eqnarray}
They are virtual $Q$-functions, and they are a basis in ${\Sigma}_{VQ}$.
Indeed, in analogy to Eq.(\ref{basis}), we can express a $Q$-function in ${\Sigma}_{VQ}$ as
\begin{eqnarray}
Q(A)=\sum _{C\subseteq \Omega}\Delta (C){\mathfrak Q}_C(A).
\end{eqnarray}

We have established an isomorphism between cooperative game theory, and $Q$-functions in ${\Sigma}_{VQ}$. We now introduce
the Shapley values $S(i)$ for the $Q$-functions, by one of the following formulas which we have shown 
(in the context of cooperative game theory) to be equal to each other.
\begin{definition}
The Shapley values $S(i)$, for the $Q$-functions in $\Sigma _{VQ}$, are given by 
\begin{eqnarray}\label{99}
S(i)=\frac{d}{d^2!}\sum _{\pi}q [i|{\cal A}(\pi ;i)]=\frac{1}{d}\sum _{A\subseteq \Omega \setminus \{i\}}\begin{pmatrix}d^2-1\\|A|\\\end{pmatrix}^{-1}q(i|A)
=d\sum _{A\ni i}\frac{\Delta (A)}{|A|};\;\;\;i\in \Omega.
\end{eqnarray}
In the case of $Q$-functions in the subset ${\Sigma _{Q}}$ of ${\Sigma _{VQ}}$, they can be expressed in terms of an operator $\theta$ as
\begin{eqnarray}\label{6nm}
S(i)=\frac{1}{(d^2!)}\sum _{\pi}{\rm Tr}[\theta \varpi [i|{\cal A}(\pi ;i)]
=\frac{1}{d^2}\sum _{A\subseteq \Omega \setminus \{i\}}\begin{pmatrix}d^2-1\\|A|\\\end{pmatrix}^{-1}{\rm Tr}[\theta \varpi (i|A)]=
\sum _{A\ni i}\frac{{\rm Tr}[\theta {\mathfrak D} (A)]}{|A|}.
\end{eqnarray}
\end{definition}
These formulas are the analogues of Eqs.(\ref{120}), (\ref{12X}), (\ref{1290}).

\subsection{Renormalization of the projectors $\Pi(i)$ into the operators $\sigma (i)$}\label{VI}

Motivated by Eq.(\ref{6nm}) we introduce below the operators $\sigma (i)$, and express the 
Shapley values in terms of them. Later we show that the  $\sigma (i)$ are coherent density matrices.

\begin{proposition}\label{V2}
The following three Hermitian operators are equal to each other, and we denote them as $\sigma (i)$:
\begin{itemize}
\item[(1)]
\begin{eqnarray}\label{eq1}
\sigma (i)=\frac{d}{(d^2!)}\sum _{\pi}\varpi [i|{\cal A}(\pi ;i)]=\frac{d}{(d^2!)}\left [(d^2-1)!\Pi(i)+(d^2-2)!\sum _j\varpi(i|j)+...\right ]
;\;\;\;i\in \Omega.
\end{eqnarray}
Coherent states are here in an order described by a permutation $\pi$. $\varpi [i|{\cal A}(\pi ;i)]$ is
the contribution of the coherent state $\ket{C;i}$ to the projector $\Pi[{\cal A}(\pi ;i)]$ of the preceding coherent states  in the permutation $\pi$.
The summation on the left hand side is over all $d^2!$ permutations.
On the right hand side the first term corresponds to the $(d^2-1)!$ permutations that have $i$ as the first element.
The second term corresponds to the permutations that have $i$ as second element; etc.
\item[(2)]
\begin{eqnarray}\label{eq2}
\sigma (i)=\frac{1}{d}\sum _{A\subseteq \Omega \setminus \{i\}}\begin{pmatrix}d^2-1\\|A|\\\end{pmatrix}^{-1}\varpi (i|A)=
\frac{1}{d}\left [\Pi(i)+\frac{1}{d^2-1}\sum _j\varpi (i|j)+...\right ]
;\;\;\;i\in \Omega.
\end{eqnarray}
Here we add the contributions of the coherent state $i$ to all aggregations (subsets of $\Omega \setminus \{i\}$).
All aggregations with the same cardinality, have equal weight.
Aggregations with different cardinalities are equally likely, and 
 this leads to the coefficient $\begin{pmatrix}d^2-1\\|A|\\\end{pmatrix}^{-1}$. 
\item[(3)]
\begin{eqnarray}\label{eq3}
\sigma (i)=d\sum _{A\ni i}\frac{{\mathfrak D} (A)}{|A|}=d\left [\Pi(i)+\sum _j\frac{{\mathfrak D} (i,j)}{2}+...\right ];\;\;\;i\in \Omega.
\end{eqnarray}
The summation involves subsets $A$ of $\Omega$, which contain $i$.
The first term is ${\mathfrak D}(i)=\Pi(i)$, and the rest are the 'added values' 
quantified with the M\"obius terms ${\mathfrak D}(A)$,
in all the aggregations that the coherent state $i$ participates.
They are divided by $|A|$ because the added value in each aggregation, is distributed equally among all its members.
The M\"obius operators quantify the non-orthogonality and non-commutativity 
of the coherent projectors $\Pi (i)$ (see Eqs.(\ref{m12}),(\ref{375})). Therefore Eq.(\ref{eq3}) shows that
the renormalization of $\Pi (i)$ is related to their non-orthogonality and non-commutativity.
\end{itemize}
The Shapley values in Eq.(\ref{6nm}) can be written as
\begin{eqnarray}\label{zse}
S(i)&=&\frac{1}{d}{\rm Tr}[\theta \sigma (i)];\;\;\;i\in \Omega.
\end{eqnarray}
\end{proposition}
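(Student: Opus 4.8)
The plan is to reduce the asserted three-way operator identity to the scalar Shapley identities already proved in Section~\ref{SH}, by pairing against an arbitrary Hermitian observable and then using non-degeneracy of the trace form. First I would observe that each of the three right-hand sides, call them $\sigma_1(i)$, $\sigma_2(i)$, $\sigma_3(i)$, is a \emph{real} linear combination of the Hermitian coherent projectors: by~(\ref{120n}) the $\varpi(i|A)=\Pi(A\cup\{i\})-\Pi(A)$ are differences of such projectors (with $\varpi(i|\emptyset)=\Pi(i)$), by~(\ref{m11}) the $\mathfrak D(A)$ are integer combinations of the $\Pi(B)$ (with $\mathfrak D(i)=\Pi(i)$), and all coefficients $1/d^2!$, $\binom{d^2-1}{|A|}^{-1}$, $1/|A|$ are real; the secondary expansions displayed on the right of~(\ref{eq1})--(\ref{eq3}) are merely regroupings of the respective sums by $|{\cal A}(\pi;i)|$, resp.\ by $|A|$, so they need no separate argument. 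In particular $\sigma_1(i),\sigma_2(i),\sigma_3(i)$ are Hermitian, and it suffices to prove ${\rm Tr}[\theta\,\sigma_1(i)]={\rm Tr}[\theta\,\sigma_2(i)]={\rm Tr}[\theta\,\sigma_3(i)]$ for every Hermitian $\theta$.

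Fix such a $\theta$ and pass to its ``corresponding game'' $v_\theta(A)={\rm Tr}[\theta\,\Pi(A)]=dQ(A|\theta)$ from~(\ref{12C}), a legitimate characteristic function on $|N|=d^2$ players (indeed $v_\theta(A)={\rm Tr}(\theta)$ for $|A|\ge d$ by~(\ref{g7})). Applying the linear functional ${\rm Tr}[\theta\,\cdot\,]$ term by term sends $\varpi(i|A)$ to the marginal contribution ${\cal V}(i|A)$ of the game $v_\theta$, and --- since M\"obius inversion commutes with any linear functional --- sends $\mathfrak D(A)$ to the M\"obius transform $\mathfrak d(A)$ of $v_\theta$. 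Tracking the powers of $d$ (the extra $d$ in~(\ref{eq1}) against $|N|!=d^2!$; the $1/d$ in~(\ref{eq2}) against $1/|N|=1/d^2$; the $d$ in~(\ref{eq3}) against the $\sum 1/|A|$ in~(\ref{1290})) one obtains, for each $k\in\{1,2,3\}$,
\begin{equation*}
\tfrac1d\,{\rm Tr}[\theta\,\sigma_k(i)]={\cal S}(i\,|\,v_\theta),
\end{equation*}
where the right side is written out through formula~(\ref{120}), (\ref{12X}), (\ref{1290}) respectively. Since these three scalar expressions coincide (Proposition~\ref{pro10} and the equivalence that yields~(\ref{1290})), the three numbers ${\rm Tr}[\theta\,\sigma_k(i)]$ agree; comparing with~(\ref{6nm}), their common value divided by $d$ is exactly $S(i)$, which is relation~(\ref{zse}).

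Finally, ${\rm Tr}\big[\theta\,(\sigma_1(i)-\sigma_2(i))\big]=0$ for all Hermitian $\theta$, and $\sigma_1(i)-\sigma_2(i)$ is itself Hermitian, so choosing $\theta=\sigma_1(i)-\sigma_2(i)$ gives ${\rm Tr}\big[(\sigma_1(i)-\sigma_2(i))^2\big]=0$, hence $\sigma_1(i)=\sigma_2(i)$, and likewise $\sigma_2(i)=\sigma_3(i)$. I do not expect a substantive obstacle: the content is just the transfer of the known Shapley identities, and the one place demanding care is the bookkeeping of the factors of $d$ coming from $|N|=d^2$ and from the $1/d$ normalisations in the resolution of the identity~(\ref{1111}) and in the $Q$-function~(\ref{mmm}). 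A more pedestrian alternative is to repeat verbatim the combinatorial counting in the proofs of Proposition~\ref{pro10} and of~(\ref{1290}) with $v(A)$ replaced by the operator $\Pi(A)$ and ${\cal V}(i|A)$, $\mathfrak d(A)$ by $\varpi(i|A)$, $\mathfrak D(A)$.
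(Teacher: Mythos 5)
Your proposal is correct and follows essentially the same route as the paper: the paper's own proof is precisely the observation that Eq.~(\ref{6nm}) makes the traces of the three candidate operators against an arbitrary Hermitian $\theta$ coincide (via the scalar Shapley identities for the game $v_\theta(A)={\rm Tr}[\theta\,\Pi(A)]$), whence the operators are equal. You merely make explicit two steps the paper leaves implicit --- the bookkeeping of the factors of $d$ in the transfer to $v_\theta$, and the non-degeneracy argument $\theta=\sigma_1(i)-\sigma_2(i)$ --- both of which are sound.
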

\begin{proof}
Eq.(\ref{6nm}) shows that the traces of these three Hermitian operators with an arbitrary Hermitian operator $\theta$, 
are equal to each other. Therefore these three operators are equal to each other.
\end{proof}
The three formulas in proposition \ref{V2}, describe three different (but equivalent) maps of the coherent projectors
$\Pi(i)$ into the `dressed coherent states\rq{} $\sigma(i)$ (which we prove below to be coherent density matrices). 
The $\sigma (i)$ is equal to $\Pi(i)$ plus the contribution of the coherent state $\ket{C;i}$ to various aggregations of coherent states.
This dressing is related to the fact that the coherent projectors $\Pi(i)$ are non-orthogonal and do not commute.
Indeed the fact that the $\Pi(i)$ are non-orthogonal, implies that the $\varpi(i|j)\ne \Pi(i)$ (and similarly for the other terms),  in Eqs. (\ref{eq1}), (\ref{eq2}).
Also the fact that the $\Pi(i)$ do not commute, implies that the ${\mathfrak D} (i,j)\ne 0$ (and similarly for the other terms), in Eq. (\ref{eq3}).

Consequently the Shapley values are dressed $Q$-functions. 

\subsection{The coherent density matrices $\sigma (i)$ and their properties}

In this section we show that the $\sigma (i)$ are coherent density matrices.
We first introduce the coherent density matrices $\tau(i|k)$,
as an intermediate step towards the  $\sigma (i)$.
They quantify the contribution of $\ket{C;i}$ into all aggregations of coherent states with cardinality $k$.
They are the analogues of the terms in the sum of Eq.(\ref{12X}), in cooperative game theory.
\begin{definition}
$\tau (i|k)$ is the (normalized) sum of the generalized coherent projectors $\varpi(i|A)$, for all $\begin{pmatrix}d^2-1\\k\\\end{pmatrix}$
subsets of $\Omega \setminus\{i\}$, with a given cardinality $k$:
\begin{eqnarray}\label{131}
\tau(i|k)=\begin{pmatrix}d^2-1\\k\\\end{pmatrix}^{-1}\sum _{|A|=k}\varpi(i|A);\;\;\;A\subseteq \Omega \setminus\{i\}.
\end{eqnarray}
This is an `average' of the contribution of $\ket{C;i}$, to aggregations
of coherent states with labels in all  sets $A$, with fixed cardinality $k$.
If $k\ge d$, then $\tau(i|k)=0$. Also 
\begin{eqnarray}
\tau(i|0)=\varpi (i|\emptyset)=\Pi(i).
\end{eqnarray}
\end{definition}
\begin{lemma}
\begin{itemize}
\item[(1)]
The set of $\tau (i|k)$ with $i=1,...,d^2$ (and fixed $k\le d-1$) is a set of coherent density matrices.
As such:
\begin{eqnarray}\label{37}
D(\gamma, \delta)\tau [(\alpha , \beta )|k][D(\gamma, \delta)]^{\dagger}=
\tau [(\alpha +\gamma, \beta +\delta)|k],
\end{eqnarray}
and
\begin{eqnarray}\label{38}
\frac{1}{d}\sum _{i=1}^{d^2}\tau (i|k)={\bf 1};\;\;\;k\le d-1.
\end{eqnarray}
\item[(2)]
The $\tau (i|k)$ commutes with $\Pi(i)$:
\begin{eqnarray}\label{nu}
[\tau (i|k),\Pi(i)]=0.
\end{eqnarray}
\item[(3)]
The $\Pi(i)$ is an eigenprojector of $\tau (i|k)$ with corresponding eigenvalue ${\mathfrak e}(k)$, where
\begin{eqnarray}\label{nu1}
\tau (i|k)\Pi(i)={\mathfrak e}(k)\Pi(i);\;\;\;{\mathfrak e}(k)=\frac{d^2-dk}{d^2-k}.
\end{eqnarray}
\end{itemize}
\end{lemma}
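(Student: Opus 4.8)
The plan is to prove the three parts of the lemma in order, exploiting the fact that $\tau(i|k)$ is built from the generalized coherent projectors $\varpi(i|A)$, whose properties we already have. For part (1), the closure property in Eq.~(\ref{37}) follows by applying $D(\gamma,\delta)(\cdot)D^{\dagger}(\gamma,\delta)$ term-by-term to the defining sum in Eq.~(\ref{131}) and using Eq.~(\ref{27}): conjugation sends $\varpi[(\alpha,\beta)|A]$ to $\varpi[(\alpha+\gamma,\beta+\delta)|A+(\gamma,\delta)]$, and as $A$ runs over all subsets of $\Omega\setminus\{(\alpha,\beta)\}$ of cardinality $k$, the shifted set $A+(\gamma,\delta)$ runs over all subsets of $\Omega\setminus\{(\alpha+\gamma,\beta+\delta)\}$ of the same cardinality, so the sum is reproduced with the index shifted. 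For the resolution of the identity Eq.~(\ref{38}), the cleanest route is to combine the two resolutions we already possess: sum the coherence-type resolution Eq.~(\ref{58}) for $\varpi$ over a suitable family, or more directly observe that $\frac1d\sum_i\tau(i|k)$ is a $D$-invariant positive operator (from part (1) applied with all displacements, or rather from the fact that averaging over all $i$ is displacement-invariant) and compute its trace. Since $\mathrm{Tr}[\varpi(i|A)]=1$ for every admissible $A$ (it is a rank-one projector, Eq.~(\ref{ggc})), we get $\mathrm{Tr}[\tau(i|k)]=1$, hence $\mathrm{Tr}[\frac1d\sum_i\tau(i|k)]=d$; an operator commuting with all displacements $D(\gamma,\delta)$ must be a multiple of the identity (irreducibility of the Heisenberg--Weyl representation for odd $d$), and the trace fixes the multiple to be $\mathbf 1$. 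Actually the quickest honest argument is to apply the corollary Eq.~(\ref{c45})-style averaging: $\frac1d\sum_i\tau(i|k)=\frac1d\binom{d^2-1}{k}^{-1}\sum_i\sum_{|A|=k,\,A\subseteq\Omega\setminus\{i\}}\varpi(i|A)=\frac1d\binom{d^2-1}{k}^{-1}\sum_{|B|=k+1}\sum_{i\in B}\varpi(i|B\setminus\{i\})$, and $\sum_{i\in B}\varpi(i|B\setminus\{i\})=\Pi(B)$ telescopes via Eq.~(\ref{120n}) (writing $B=\{j_1,\dots,j_{k+1}\}$ in any order, $\sum_\ell\varpi(j_\ell|\{j_1,\dots,j_{\ell-1}\})=\Pi(B)$); then use the rank-$k{+}1$ resolution Eq.~(\ref{c45}) with the identification of binomial coefficients $\binom{d^2-1}{k}=\binom{d^2}{k+1}(k+1)/d^2$ to land on $\mathbf 1$.

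For part (2), I would show $[\tau(i|k),\Pi(i)]=0$ by reducing to the individual terms: it suffices that $[\varpi(i|A),\Pi(i)]=0$ for every $A\subseteq\Omega\setminus\{i\}$ with $|A|<d$. This is immediate from the explicit Gram--Schmidt formula Eq.~(\ref{ggc}): $\varpi(i|A)=\Pi^\perp(A)\Pi(i)\Pi^\perp(A)/\mathrm{Tr}[\Pi^\perp(A)\Pi(i)]$ is the rank-one projector onto the normalized component of $\ket{C;i}$ orthogonal to $H(A)$; call that unit vector $\ket{f}$. Then $\ket{C;i}=\ket{f}\cos\phi+\ket{g}\sin\phi$ with $\ket{g}\in H(A)$, so $\Pi(i)$ and $\varpi(i|A)=\ket{f}\bra{f}$ both act within the two-dimensional space $\mathrm{span}(\ket{f},\ket{g})$ and annihilate its complement. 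On that two-dimensional space one checks directly that $\ket{f}\bra{f}\cdot\Pi(i)$ and $\Pi(i)\cdot\ket{f}\bra{f}$ are both equal to $\cos^2\phi\,\ket{f}\bra{C;i}$... wait, that is not symmetric; the clean statement is that $\Pi^\perp(A)$ commutes with neither, but the combination does: compute $\varpi(i|A)\Pi(i)=\ket{f}\bra{f}C;i\rangle\bra{C;i}=\cos\phi\,\ket{f}\bra{C;i}$ and $\Pi(i)\varpi(i|A)=\cos\phi\,\ket{C;i}\bra{f}$, which are Hermitian conjugates and equal iff... they are not equal in general. So the correct route is different: take traces against an arbitrary $\theta$ is no help for a commutator. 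Instead I would argue that $\tau(i|k)$, being a displacement-covariant average, when combined appropriately still has $\Pi(i)$ as an eigenprojector — which is exactly part (3), and part (2) follows from part (3) since having $\Pi(i)$ as an eigenprojector ($\tau(i|k)\Pi(i)={\mathfrak e}(k)\Pi(i)$, and by Hermiticity $\Pi(i)\tau(i|k)={\mathfrak e}(k)\Pi(i)$ as well) immediately gives $[\tau(i|k),\Pi(i)]=0$. So I would prove (3) first and derive (2) from it.

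For part (3), the eigenvalue computation, I would compute $\tau(i|k)\Pi(i)$ by evaluating $\bra{C;i}\tau(i|k)\ket{C;i}$ and showing $\tau(i|k)\ket{C;i}$ is proportional to $\ket{C;i}$. Since $\varpi(i|A)=\ket{f_A}\bra{f_A}$ with $\ket{f_A}$ the unit normal component of $\ket{C;i}$ off $H(A)$, we have $\varpi(i|A)\ket{C;i}=\langle f_A|C;i\rangle\ket{f_A}$ and $\bra{C;i}\varpi(i|A)\ket{C;i}=|\langle f_A|C;i\rangle|^2=\mathrm{Tr}[\Pi^\perp(A)\Pi(i)]=1-\mathrm{Tr}[\Pi(A)\Pi(i)]$. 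The key combinatorial identity is therefore to show that $\binom{d^2-1}{k}^{-1}\sum_{|A|=k,\,A\subseteq\Omega\setminus\{i\}}\mathrm{Tr}[\Pi(A)\Pi(i)]$ has a clean closed form. Here I would use that for $|A|\le d$, $\mathrm{Tr}[\Pi(A)\Pi(i)]$ sums nicely: $\sum_{|A|=k,\,A\subseteq\Omega\setminus\{i\}}\Pi(A)$ — hmm, but $\Pi(A)$ for varying $A$ do not add to anything scalar unless we use the resolution. The trick: $\sum_{j\ne i}\mathrm{Tr}[\Pi(i)\varpi(j|\{i\})]=\mathrm{Tr}[\Pi(i)(\sum_{j\ne i}\varpi(j|i))]$; but from the resolution Eq.~(\ref{45}) for rank-one... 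Actually the cleanest identity is $\sum_{|B|=k+1}\mathrm{Tr}[\Pi(i)\Pi(B)]$ where the sum over all $(k{+}1)$-subsets containing $i$ — using $\frac1d\sum_{|B|=k+1}\Pi(B)=\frac1d\binom{d^2}{k+1}\cdot\frac{k+1}{d^2}\cdot d\cdot\mathbf 1$-type resolution from Eq.~(\ref{c45}), restricted to those $B\ni i$, and the covariance of the $\varpi$'s — this is where the factor $(d^2-dk)/(d^2-k)$ will emerge. Precisely, I would instead compute directly: $\tau(i|k)\Pi(i)={\mathfrak e}(k)\Pi(i)$ where ${\mathfrak e}(k)=\mathrm{Tr}[\tau(i|k)\Pi(i)]=\binom{d^2-1}{k}^{-1}\sum_{|A|=k}(1-\mathrm{Tr}[\Pi(A)\Pi(i)])$, so I need $\sum_{|A|=k,\,A\not\ni i}\mathrm{Tr}[\Pi(A)\Pi(i)]$. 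Using that $\mathrm{Tr}[\Pi(A)\Pi(i)]=\mathrm{Tr}[\Pi(A\cup\{i\})]-\mathrm{Tr}[\varpi(i|A)]\cdot 0$... no — rather $\mathrm{Tr}[\Pi(A)\Pi(i)]=|A|-\mathrm{Tr}[\Pi^\perp(A)\Pi(i)]$ is circular. The honest computation: $\mathrm{Tr}[\Pi(A)\Pi(i)]+\mathrm{Tr}[\varpi(i|A)]=\mathrm{Tr}[\Pi(A\cup\{i\})]=k+1$ only when $\varpi(i|A)=\Pi(A\cup\{i\})-\Pi(A)$, giving $\mathrm{Tr}[\Pi(A)\Pi(i)]=k+1-\mathrm{Tr}[\Pi(A)]-1=k-k+(\text{wait})$. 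Since $\mathrm{Tr}[\Pi(A)]=|A|=k$ and $\mathrm{Tr}[\varpi(i|A)]=1$, and $\mathrm{Tr}[\Pi(A\cup\{i\})]=k+1$ (as $|A\cup\{i\}|=k+1\le d$), the relation $\varpi(i|A)=\Pi(A\cup\{i\})-\Pi(A)$ gives no info on $\mathrm{Tr}[\Pi(A)\Pi(i)]$ directly — I need a genuinely new averaging. The right tool is: for fixed $i$, averaging over all $(k{+}1)$-subsets $B\ni i$ of $\frac1{|B|}\Pi(B)$ reconstructs $\sigma(i)$-type objects; specifically from $\sum_{i\in B}\varpi(i|B\setminus i)=\Pi(B)$ and covariance, the average of $\mathrm{Tr}[\Pi(i)\Pi(B)]$ over such $B$ equals $(k+1)\cdot\frac{\text{something}}{\cdots}$. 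I will derive it from the resolution identity directly.

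The main obstacle, and the step I expect to consume most of the work, is the closed-form evaluation of the averaged overlap $\binom{d^2-1}{k}^{-1}\sum_{|A|=k,\,A\subseteq\Omega\setminus\{i\}}\mathrm{Tr}[\Pi(A)\Pi(i)]$ yielding exactly ${\mathfrak e}(k)=(d^2-dk)/(d^2-k)$; everything else (covariance, the trace-one normalization, deducing part (2) from part (3)) is bookkeeping once that identity is in hand. I would attack it by writing $\mathrm{Tr}[\Pi(i)\Pi(A)]=\sum_{j\in A}\mathrm{Tr}[\Pi(i)\varpi(j|A_{<j})]$ for some ordering and then recognizing, after symmetrizing over orderings and over which $(k{+}1)$-set $A\cup\{j\}$ is being built, that the total collapses onto the resolution-of-identity constant; the binomial identity $\binom{d^2-1}{k}=\frac{d^2-k}{d^2}\binom{d^2}{k}$ and its relatives, already used in the corollary after Eq.~(\ref{c45}), will produce the ratio $\frac{d^2-dk}{d^2-k}$, with the $dk$ in the numerator reflecting that $\Pi(i)$ "uses up" a $d$-dimensional resolution while the $\varpi$'s span only one dimension each.
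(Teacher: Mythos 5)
Your part (1) is essentially sound: the closure property by term-wise covariance of the $\varpi$'s together with the bijection $A\mapsto A+(\gamma,\delta)$ on $k$-subsets is exactly the right argument, and the resolution of the identity via displacement-invariance of $\frac{1}{d}\sum_i\tau(i|k)$ plus the trace-one computation is valid (and is in substance the paper's route, which notes that a positive, trace-one, displacement-covariant family is a set of coherent density matrices and invokes the general resolution of the identity for those). One warning: your ``quickest honest argument'' rests on the identity $\sum_{i\in B}\varpi(i|B\setminus\{i\})=\Pi(B)$, which is false. The telescoping sum conditions each element on its \emph{predecessors} in a chosen ordering, whereas your sum conditions each element on \emph{all} the others; the correct value is $(k+1)\Pi(B)-\sum_{i\in B}\Pi(B\setminus\{i\})$, and only after summing over all $B$ and using Eq.~(\ref{c45}) twice does one recover $\mathbf 1$. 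Your reduction of part (2) to part (3) (Hermiticity turns a right eigenprojector relation into a two-sided one, hence commutation) is fine.

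The genuine gap is part (3), which is the heart of the lemma and which you leave unproved on both of its two fronts: you never establish that $\tau(i|k)\ket{C;i}$ is \emph{proportional} to $\ket{C;i}$ (computing the diagonal matrix element $\bra{C;i}\tau(i|k)\ket{C;i}$ does not suffice, and as you yourself observe, each individual $\varpi(i|A)\ket{C;i}$ points along a different $\ket{f_A}$, so the realignment can only come from the average), and you never actually evaluate the averaged overlap, ending with ``I will derive it from the resolution identity directly.'' The single missing idea that settles both at once is to split the rank-$k$ resolution of the identity, Eq.~(\ref{c45}), into $k$-subsets containing $i$ and not containing $i$:
\begin{equation*}
\sum_{\substack{|A|=k\\ A\not\ni i}}\Pi(A)\;=\;d\binom{d^2-1}{k-1}\mathbf 1\;-\;\sum_{\substack{|B|=k-1\\ B\not\ni i}}\Pi(B\cup\{i\}),
\end{equation*}
so that
\begin{equation*}
\binom{d^2-1}{k}\,\tau(i|k)\;=\;\sum_{\substack{|A|=k\\ A\not\ni i}}\Pi(A\cup\{i\})\;+\;\sum_{\substack{|B|=k-1\\ B\not\ni i}}\Pi(B\cup\{i\})\;-\;d\binom{d^2-1}{k-1}\mathbf 1 .
\end{equation*}
Every operator on the right is either $\mathbf 1$ or a projector $\Pi(C)$ with $i\in C$, and all of these satisfy $\Pi(C)\Pi(i)=\Pi(i)\Pi(C)=\Pi(i)$; hence $\tau(i|k)$ commutes with $\Pi(i)$, and multiplying by $\Pi(i)$ and counting terms ($\binom{d^2-1}{k}$ sets $A$, $\binom{d^2-1}{k-1}$ sets $B$) gives
${\mathfrak e}(k)=1+(1-d)\binom{d^2-1}{k-1}\big/\binom{d^2-1}{k}=1-(d-1)k/(d^2-k)=(d^2-dk)/(d^2-k)$. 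Without this rewriting your plan for (3) does not close, and since you route (2) through (3), both parts remain unproved.
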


\begin{proof}
\begin{itemize}
\item[(1)]
We first prove Eq.(\ref{37}).
Eq.(\ref{27}) is a bijective map from projectors corresponding to subsets $A$ of $\Omega \setminus \{(\alpha ,\beta)\}$, 
to projectors corresponding to subsets $A+(\gamma, \delta)$ of $\Omega \setminus \{(\alpha +\gamma,\beta +\delta)\}$.
The sets $A$ and $A+(\gamma, \delta)$ have the same cardinality.
The definition of $\tau[(\alpha , \beta )|k]$ involves all projectors corresponding to
subsets of $\Omega \setminus \{(\alpha , \beta )\}$ with cardinality $k$, and with equal weight.
Consequently, the $D(\gamma, \delta)\tau [(\alpha , \beta )|k][D(\gamma, \delta)]^{\dagger}$ involves 
projectors corresponding to all subsets of $\Omega \setminus \{(\alpha +\gamma, \beta +\delta)\}$ with cardinality $k$ and with equal weight, and it
is equal to $\tau[(\alpha +\gamma, \beta +\delta)|k]$.

We next show that the trace of $\tau (i|k)$ is equal to $1$.
The $\tau (i|k)$ are sums of projectors with positive coefficients, and therefore they are Hermitian positive semidefinite operators.
Their trace is
\begin{eqnarray}\label{1210}
{\rm Tr}[\tau(i|k)]=\begin{pmatrix}d^2-1\\k\\\end{pmatrix}^{-1}\sum _{|A|=k}{\rm Tr}[\varpi(i|A)]
;\;\;\;A\subseteq \Omega \setminus \{i\}
\end{eqnarray}
where
\begin{eqnarray}
&&|A|<d\;\;\rightarrow\;\;{\rm Tr}[\varpi(i|A)]=1\nonumber\\
&&|A|\ge d\;\;\rightarrow\;\;{\rm Tr}[\varpi(i|A)]=0.
\end{eqnarray}
For fixed $|A|=k$, the number of possible choices for the set $A$ is $\begin{pmatrix}d^2-1\\k\\\end{pmatrix}$ and
therefore
\begin{eqnarray}
{\rm Tr}[\tau(i|k)]=\begin{pmatrix}d^2-1\\k\\\end{pmatrix}^{-1}\begin{pmatrix}d^2-1\\k\\\end{pmatrix}=1.
\end{eqnarray}
It is now clear that $\tau(i|k)$ are special cases of coherent density matrices.
Therefore the resolutions of the identity in Eq.(\ref{38}), hold.
\item[(2)]
\begin{eqnarray}
\begin{pmatrix}d^2-1\\k\\\end{pmatrix}\tau(i|k)=\sum _{|A|=k}\varpi(i|A)=\sum _{|A|=k}\Pi(A\cup \{i\})-\sum _{|A|=k}\Pi(A);\;\;\;A\subseteq \Omega \setminus\{i\}.
\end{eqnarray}
The resolution of the identity in Eq.(\ref{c45}), that involves all subsets with a given cardinality $k$, can be written as
\begin{eqnarray}
&&\sum _{|B|=k-1}\Pi(B\cup \{i\})+\sum _{|A|=k}\Pi(A)=d\begin{pmatrix}d^2-1\\k-1\\\end{pmatrix}{\bf 1};\;\;\;
A,B\subseteq \Omega \setminus \{i\}
\end{eqnarray}
It follows that 
\begin{eqnarray}\label{nu10}
\begin{pmatrix}d^2-1\\k\\\end{pmatrix}\tau(i|k)=\sum _{|A|=k}\Pi(A\cup \{i\})-d\begin{pmatrix}d^2-1\\k-1\\\end{pmatrix}{\bf 1}
+\sum _{|B|=k-1}\Pi(B\cup\{i\});\;\;\;A,B\subseteq \Omega \setminus\{i\}.
\end{eqnarray}
This and the fact that $\Pi(A\cup \{i\})\Pi(i)=\Pi(i)\Pi(A\cup \{i\})=\Pi(i)$, proves Eq.(\ref{nu}).
\item[(3)]
We multiply both sides of Eq.(\ref{nu10}), with $\Pi(i)$ and we get
\begin{eqnarray}\label{nu104}
\begin{pmatrix}d^2-1\\k\\\end{pmatrix}\tau(i|k)\Pi(i)=\begin{pmatrix}d^2-1\\k\\\end{pmatrix}\Pi(i)-d\begin{pmatrix}d^2-1\\k-1\\\end{pmatrix}\Pi(i)+
\begin{pmatrix}d^2-1\\k-1\\\end{pmatrix}\Pi(i).
\end{eqnarray}
Therefore
\begin{eqnarray}\label{nu105}
\tau(i|k)\Pi(i)=\begin{pmatrix}d^2-1\\k\\\end{pmatrix}^{-1}\begin{pmatrix}d^2-1\\k\\\end{pmatrix}\Pi(i)
-(d-1)\begin{pmatrix}d^2-1\\k\\\end{pmatrix}^{-1}\begin{pmatrix}d^2-1\\k-1\\\end{pmatrix}\Pi(i)={\mathfrak e}(k)\Pi(i).
\end{eqnarray}
where ${\mathfrak e}(k)$ is given in Eq.(\ref{nu1}).
\end{itemize}
\end{proof}

The matrices $\sigma(i)$ can be written in terms of the $\tau(i|k)$ as
\begin{eqnarray}\label{121}
\sigma(i)=\frac{1}{d}[\Pi(i)+\tau(i|1)+...+\tau(i|d-1)].
\end{eqnarray}
\begin{proposition}\label{cxz}
\begin{itemize}
\item[(1)]
The $\sigma(i)$ are coherent density matrices, and therefore they obey the relations
\begin{eqnarray}\label{37av}
D(\gamma, \delta)\sigma(\alpha _i, \beta _i)[D(\gamma, \delta)]^{\dagger}=\sigma(\alpha _i+\gamma , \beta _i+\delta)
\end{eqnarray}
and
\begin{eqnarray}\label{12}
\frac{1}{d}\sum _{i=1}^{d^2} \sigma(i)={\bf 1};\;\;\;\;
{\rm Tr}[\sigma(i)]=1.
\end{eqnarray}
The Shapley values in Eq.(\ref{zse}) are $Q$-functions with respect 
to the coherent density matrices $\sigma (i)$.
\item[(2)]
The $\sigma(i)$ commutes with $\Pi(i)$:
\begin{eqnarray}
[\sigma(i),\Pi(i)]=0.
\end{eqnarray}
\item[(3)]
The $\Pi(i)$ is an eigenprojector of $\sigma(i)$ with corresponding eigenvalue ${\mathfrak e}$, where
\begin{eqnarray}\label{nu4}
\sigma (i)\Pi(i)={\mathfrak e}\Pi(i);\;\;\;{\mathfrak e}=\sum _{k=0}^{d-1}\frac{d-k}{d^2-k}.
\end{eqnarray}
\end{itemize}
\end{proposition}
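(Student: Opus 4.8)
The plan is to read every clause of the proposition off the single structural identity Eq.(\ref{121}), which I would write in the compact form $\sigma(i)=\frac{1}{d}\sum_{k=0}^{d-1}\tau(i|k)$ with the convention $\tau(i|0)=\Pi(i)$, and then invoke the lemma just proved for the $\tau(i|k)$. Each of the three items is obtained by applying a single linear operation to this identity term by term --- conjugation by $D(\gamma,\delta)$, summation over $i\in\Omega$, taking the trace, commuting with $\Pi(i)$, or right-multiplying by $\Pi(i)$ --- and then collecting the normalization constants. So there is very little to do beyond the bookkeeping.

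For item (1), the closure property follows by conjugating Eq.(\ref{121}) by $D(\gamma,\delta)$ and applying Eq.(\ref{37}) to each $\tau(i|k)$ with $1\le k\le d-1$ and the $k=0$ instance of that relation (equivalently the second line of Eq.(\ref{1111})) to $\Pi(i)$: every summand is carried to the summand with labels shifted by $(\gamma,\delta)$, hence so is $\sigma$. Taking $(\alpha_i,\beta_i)=(0,0)$ this yields $\sigma(\alpha_i,\beta_i)=D(\alpha_i,\beta_i)\sigma(0,0)D^{\dagger}(\alpha_i,\beta_i)$. For the resolution of the identity I would sum Eq.(\ref{121}) over $i$, getting $\frac{1}{d}\sum_i\sigma(i)=\frac{1}{d^2}\sum_{k=0}^{d-1}\sum_i\tau(i|k)$; each inner sum equals $d\,{\bf 1}$ by Eq.(\ref{38}) (and by $\frac{1}{d}\sum_i\Pi(i)={\bf 1}$ for $k=0$), and there are $d$ values of $k$, so the total is $\frac{1}{d^2}\cdot d\cdot d\,{\bf 1}={\bf 1}$. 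The trace is ${\rm Tr}[\sigma(i)]=\frac{1}{d}\sum_{k=0}^{d-1}{\rm Tr}[\tau(i|k)]=\frac{1}{d}\cdot d=1$, since each $\tau(i|k)$ (including $\Pi(i)$) has unit trace. Being Hermitian and a nonnegative combination of positive semidefinite operators, $\sigma(i)$ is positive semidefinite, so together with unit trace and the closure property --- with fiducial density matrix $R_0=\sigma(0,0)$ --- it satisfies Definition \ref{def}, i.e. it is a coherent density matrix. The last sentence of item (1) is just the observation that Eq.(\ref{zse}), $S(i)=\frac{1}{d}{\rm Tr}[\theta\sigma(i)]$, is precisely the defining formula Eq.(\ref{mmm1}) of the $Q$-function with respect to the $\sigma(i)$.

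Items (2) and (3) are immediate from the eigenrelations already available for the $\tau(i|k)$. For (2), expand $[\sigma(i),\Pi(i)]=\frac{1}{d}\sum_{k=0}^{d-1}[\tau(i|k),\Pi(i)]$: the $k=0$ term is $[\Pi(i),\Pi(i)]=0$ and each $k\ge 1$ term vanishes by Eq.(\ref{nu}). For (3), right-multiply Eq.(\ref{121}) by $\Pi(i)$ and use Eq.(\ref{nu1}) to get $\sigma(i)\Pi(i)=\bigl(\frac{1}{d}\sum_{k=0}^{d-1}{\mathfrak e}(k)\bigr)\Pi(i)$; substituting ${\mathfrak e}(k)=\frac{d^2-dk}{d^2-k}=\frac{d(d-k)}{d^2-k}$ turns the coefficient into $\sum_{k=0}^{d-1}\frac{d-k}{d^2-k}={\mathfrak e}$ (the $k=0$ term is consistent since ${\mathfrak e}(0)=1$ and $\Pi(i)^2=\Pi(i)$), and by (2) this makes $\Pi(i)$ a genuine eigenprojector.

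I do not expect a real obstacle here: the proposition is essentially a corollary of Eq.(\ref{121}) and the preceding lemma obtained by linearity. The only points that need a little care are the bookkeeping of the two factors of $1/d$ and of the number $d$ of admissible values of $k$ in the resolution of the identity, and the remark that ``coherent density matrix'' in the sense of Definition \ref{def} really does follow from closure together with positivity and unit trace, once one notes $\sigma(\alpha,\beta)=D(\alpha,\beta)\sigma(0,0)D^{\dagger}(\alpha,\beta)$.
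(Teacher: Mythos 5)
Your proposal is correct and follows exactly the paper's route: the paper's own proof consists of the three one-line observations that $\sigma(i)$ is a convex-type combination $\frac{1}{d}\sum_{k=0}^{d-1}\tau(i|k)$ of the coherent density matrices $\tau(i|k)$, and that items (2) and (3) follow immediately from Eqs.(\ref{nu}) and (\ref{nu1}). You have simply written out in full the bookkeeping (the factors of $1/d$, the sum over the $d$ values of $k$, and the identity $\frac{1}{d}\sum_{k}\mathfrak{e}(k)=\sum_{k}\frac{d-k}{d^2-k}$) that the paper leaves implicit, and all of it checks out.
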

\begin{proof}
\begin{itemize}
\item[(1)]
The $\sigma(i)$ are sums of the coherent density matrices $\tau(i|k)$, with $1/d$ as coefficients (Eq.(\ref{121}).
It follows that the $\sigma(i)$ are themselves coherent density matrices.
\item[(2)]
This follows immediately from Eq.(\ref{nu}).
\item[(3)]
This follows immediately from Eq.(\ref{nu1}).
\end{itemize}
\end{proof}

Since the Shapley values are $Q$-functions with respect 
to the coherent density matrices $\sigma (i)$,
they have the properties in Eq.(\ref{mmm11}), which are the analogues of the properties in Eqs.(\ref{bbb}),(\ref{add}), in the context of cooperative game theory.  
An analogue to Eq.(\ref{bbbc}) is valid in a quantum context within the set $\Sigma _{VQ}$, but not within the physical set $\Sigma _{Q}$.
As we have seen in proposition \ref{rfv}, the $Q_{\pi}$ corresponding to a physical $Q$-function $Q(i|\theta)$ is a virtual $Q$-function, and there is no operator $\theta _{\pi}$ corresponding to $\theta$.

From proposition \ref{cxz} it follows that
\begin{eqnarray}
&&\sigma(i)={\mathfrak e}\Pi(i)+\sum _{k=1}^{d-1}{\mathfrak e} _kP_k(i);\;\;\;\sum _{k=1}^{d-1}{\mathfrak e} _k=1-{\mathfrak e}\nonumber\\
&&P_k(i)P_{\ell}(i)=P_k(i)\Pi(i)=0.
\end{eqnarray}
where ${\mathfrak e} _k, P_k (i) $ are eigenvalues and eigenprojectors of $\sigma(i)$, which for a given fiducial vector, are defined uniquely. 
The 
\begin{eqnarray}
\sigma(i)-\Pi(i)=\sum _{k=1}^{d-1}{\mathfrak e} _k[P_k(i)-\Pi(i)]
\end{eqnarray}
It is easily seen that
\begin{eqnarray}
\sum _{i=1}^{d^2}[\sigma(i)-\Pi(i)]=0;\;\;\;{\rm Tr}[\sigma(i)-\Pi(i)]=0.
\end{eqnarray}
Therefore
\begin{eqnarray}
\sum _{i=1}^{d^2}S(i|\theta)=\sum _{i=1}^{d^2}Q(i|\theta)={\rm Tr}\theta.
\end{eqnarray}
Also for $\theta ={\bf 1}$, we get 
\begin{eqnarray}
S(i|{\bf 1})=Q(i|{\bf 1})=\frac{1}{d}.
\end{eqnarray}
\subsection{Example}\label{exa}
In the three-dimensional Hilbert space $H(3)$, we consider the $9$ coherent states $\ket{C; \alpha, \beta}$ 
with the generic fiducial vector
\begin{eqnarray}\label{fi}
&&\ket{\eta }=a\ket{X;0}+b\ket{X;1}+c\ket{X;2}\nonumber\\
&&a=0.169(1+i);\;\;\;\;b=-0.338;\;\;\;\;c=0.845-0.338i
\end{eqnarray}
There are $9$ projectors $\Pi (i)$ and $36$ projectors $\Pi(i,j)$  with $i,j=1,...,9$ and $i<j$.
The projectors $\Pi(i,j,k)={\bf 1}$.
The coherent density matrix $\sigma (1)$ is given by
\begin{eqnarray}
&&\sigma(1)=\frac{1}{3}\Pi(1)+\frac{1}{24}\sum _{i=2}^9\varpi(1|i)+\frac{1}{84}\sum _{i,j=2}^9\varpi(1|i,j);\;\;\;i<j\nonumber\\
&&\varpi(1|i)=\Pi(1,i)-\Pi(i);\;\;\;\;\varpi(1|i,j)=\Pi(1,i,j)-\Pi(i,j)={\bf 1}-\Pi(i,j)
\end{eqnarray}
There are $8$ and $28$ terms, in the two sums on the right hand side of this equation, and we get
\begin{eqnarray}
\sigma(1)=\begin{pmatrix}
0.162&-0.040-0.038i&0.049+0.117i\\
-0.040+0.038i&0.210&-0.164-0.065i\\
0.049-0.117i&-0.164+0.065i&0.628
\end{pmatrix},
\end{eqnarray}
in the position basis.
Its eigenvalues are
\begin{eqnarray}\label{eig}
{\mathfrak e}=0.726;\;\;\;\;{\mathfrak e} _1=0.125;\;\;\;\;{\mathfrak e} _2=0.149.
\end{eqnarray}
The value of ${\mathfrak e}$ agrees with the result given in Eq.(\ref{nu4}), and it does not depend on the choice of the fiducial vector.
The values of ${\mathfrak e} _1, {\mathfrak e} _2$ do depend on the choice of the fiducial vector.
It is seen that  the density matrix $\sigma(1)$ describes a mixed state.
The rest of the density matrices $\sigma(i)$ have been calculated through displacement transformations (Eq.(\ref{37av})), which we rewrite
 in the single index notation, as
\begin{eqnarray}
D(i)\sigma(1)[D(i)]^{\dagger}=\sigma(i).
\end{eqnarray}
Since they are related to $\sigma (1)$ through these unitary transformations, they have the same eigenvalues, given in Eq.(\ref{eig}).

We have calculated the Shapley values in Eq.(\ref{zse}), for the following  operators $\theta$:
\begin{eqnarray}\label{example}
\theta_1=\ket{X;0}\bra{X;0};\;\;\;\theta_2=\Pi(6)=\Pi(1,2);\;\;\;\theta _3=\Pi(1,6)=\Pi(0,0;1,2)
;\;\;\;\theta_4=
\begin{pmatrix}
3&1-i&-2\\
1+i&5&2-i\\
-2&2+i&4
\end{pmatrix}
\end{eqnarray}
The matrix $\theta_4$ is expressed in the position basis.
The results are shown in table \ref{t1}.
We also give in table \ref{t2} the values of the $Q$-function in Eq.(\ref{mmm}), for these operators.
The $\delta (i)=S(i)-Q(i)$ 
describes the dressing of $Q(i)$, due to the contribution of the coherent state $\ket{C;i}$ into aggregations of other coherent states.

\subsection{Equivalence classes in $\Sigma _{VQ}$}\label{equiv}
In $\Sigma _G$, we use the notation $v_1\underset {S}\sim v_2$ for two cooperative games with the same Shapley values $S(i|v_1)=S(i|v_2)$.
It is easily seen that $\underset {S}\sim$ is an equivalence relation.
We define several equivalence relations in the paper, and we use indices to distinguish them.
Here the index $S$ in the notation indicates `Shapley'.

The properties reflexivity, symmetry and transitivity, hold:
\begin{eqnarray}
&&v_1\underset {S}\sim v_1\nonumber\\
&&v_1\underset {S}\sim v_2\;\rightarrow\; v_2\underset {S}\sim v_1\nonumber\\
&&v_1\underset {S}\sim v_2\;{\rm and}\;v_2\underset {S}\sim v_3\;\rightarrow\; v_1\underset {S}\sim v_3
\end{eqnarray}
The set $\Sigma _G$ is partitioned into equivalence classes.
Cooperative games in the same equivalence class have the same Shapley values.
$\Sigma _G/\underset {S}\sim$ is the set of these equivalence classes.

In analogous way the set $\Sigma _{VQ}$ is partitioned into equivalence classes of virtual $Q$-functions, with the same Shapley values.
The following proposition shows that in each equivalence class, there is exactly one physical $Q$-function and exactly one Hermitian operator $\theta$, with these Shapley values.
\begin{proposition}
A Hermitian operator is defined uniquely by its Shapley values.
\end{proposition}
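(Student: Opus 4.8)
The plan is to read the statement as the injectivity of a linear map between real vector spaces of the same dimension. By Eq.~(\ref{zse}) the Shapley value $S(i\,|\,\theta)=\tfrac1d\,{\rm Tr}[\theta\,\sigma(i)]$ depends $\mathbb R$-linearly on the Hermitian operator $\theta$, so $\theta\mapsto(S(1|\theta),\dots,S(d^2|\theta))$ is a linear map from the $d^2$-dimensional real space of Hermitian operators on $H(d)$ into $\mathbb R^{d^2}$. Since domain and codomain have equal dimension, it is enough to prove injectivity, i.e.\ that $S(i|\theta)=0$ for all $i\in\Omega$ forces $\theta=0$; equivalently, one must show that the $d^2$ dressed coherent states $\sigma(i)$ form a \emph{basis} of the space of Hermitian operators (being $d^2$ of them, spanning is the same as linear independence). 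Conceptually this is the exact analogue, at the level of the dressed objects, of the fact that the bare projectors $\Pi(i)$ already form a basis.

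I would then reduce the claim to that bare-level fact. As noted after Eq.~(\ref{xxxz}), genericity of the fiducial vector makes ${\mathfrak M}_{ij}$ invertible, which is precisely the statement that $\{\Pi(j)\}_{j\in\Omega}$ is a basis of the Hermitian operators; hence there is a unique matrix $c$ with $\sigma(i)=\sum_j c_{ij}\Pi(j)$, and then $S(i|\theta)=\sum_j c_{ij}Q(j|\theta)$. Since $(Q(j|\theta))_j$ determines $\theta$ (Eq.~(\ref{xxxz})), the proposition is equivalent to invertibility of $c$. Next I would use the covariance $D(\gamma,\delta)\sigma(\alpha_i,\beta_i)D^{\dagger}(\gamma,\delta)=\sigma(\alpha_i+\gamma,\beta_i+\delta)$ of proposition~\ref{cxz}(1) together with the same relation for $\Pi$: comparing the two expansions of $\sigma(i+(\gamma,\delta))$ gives $c_{i+(\gamma,\delta),\,j+(\gamma,\delta)}=c_{ij}$, so $c_{ij}=f(j-i)$ is a convolution kernel on ${\mathbb Z}(d)\times{\mathbb Z}(d)$. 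A standard group-Fourier argument (the same mechanism that diagonalises the passage between $\theta$ and its $Q$-function) then reduces invertibility of $c$ to the single non-vanishing condition ${\rm Tr}[\sigma(0,0)\,D^{\dagger}(\kappa,\lambda)]\ne 0$ for every $(\kappa,\lambda)\in{\mathbb Z}(d)\times{\mathbb Z}(d)$; indeed ${\rm Tr}[\sigma(0,0)D^{\dagger}(\kappa,\lambda)]=\langle\eta|D^{\dagger}(\kappa,\lambda)|\eta\rangle\,\widehat f(\kappa,\lambda)$, and the overlap $\langle\eta|D^{\dagger}(\kappa,\lambda)|\eta\rangle$ is already nonzero everywhere by genericity of $\eta$.

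The crux, and what I expect to be the real obstacle, is to establish this last non-vanishing. For $(\kappa,\lambda)=(0,0)$ it is just ${\rm Tr}\,\sigma(0,0)=1$ (Eq.~(\ref{12})). For $(\kappa,\lambda)\ne(0,0)$ I would use the eigendecomposition of proposition~\ref{cxz}(3), $\sigma(0,0)={\mathfrak e}\,\Pi(0,0)+\sum_{k=1}^{d-1}{\mathfrak e}_k P_k$ with the \emph{universal} dominant eigenvalue ${\mathfrak e}=\sum_{k=0}^{d-1}\frac{d-k}{d^2-k}>0$ and $\sum_k{\mathfrak e}_k=1-{\mathfrak e}$: the leading term contributes ${\mathfrak e}\,\langle\eta|D^{\dagger}(\kappa,\lambda)|\eta\rangle\ne 0$, and one must argue that for a generic fiducial vector the remainder $\sum_k{\mathfrak e}_k\,{\rm Tr}[P_k D^{\dagger}(\kappa,\lambda)]$ never exactly cancels it — i.e.\ that $\sigma(0,0)$ is itself a \emph{generic} fiducial density matrix in the sense of Section~\ref{cohden}. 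This is a genuine non-degeneracy fact about the specific operators $\sigma(i)$, not a formality; the clean way to secure it is either to impose it as part of genericity (which the ``generic fiducial density matrix'' terminology already anticipates), or to verify the non-vanishing directly from the explicit form (\ref{eq3}) of $\sigma(i)$ in terms of the M\"obius operators, whose symplectic Fourier transforms are fixed rational combinations of the overlaps $\langle\eta|D^{\dagger}(\kappa,\lambda)|\eta\rangle$. Once that is done, $c$ is invertible, $(S(i|\theta))_i$ determines $(Q(i|\theta))_i$ and hence $\theta$, and the proposition follows.
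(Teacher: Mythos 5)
Your proposal is correct in substance and starts from the same place as the paper's own proof: Eq.~(\ref{zse}) is a linear system of $d^2$ equations in the $d^2$ real parameters of the Hermitian matrix $\theta$, and the proposition amounts to the invertibility of that system. The paper stops essentially there, asserting that ``the generic nature of the fiducial vector ensures that the determinant of ${\mathfrak s}_{ij}$ is non-zero'' and then solving for $\theta$. You go further: by expanding $\sigma(i)=\sum_j c_{ij}\Pi(j)$ in the bare basis (legitimate, since invertibility of ${\mathfrak M}$ in Eq.~(\ref{xxxz}) is exactly the statement that the $\Pi(j)$ span the Hermitian operators) and invoking the displacement covariance of both families, you show that $c$ is a convolution kernel on ${\mathbb Z}(d)\times{\mathbb Z}(d)$, so that invertibility reduces to the nowhere-vanishing of ${\rm Tr}[\sigma(0,0)D^{\dagger}(\kappa,\lambda)]=\widehat f(\kappa,\lambda)\,\langle\eta|D^{\dagger}(\kappa,\lambda)|\eta\rangle$. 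This is a genuine sharpening: it replaces one opaque $d^2\times d^2$ determinant by $d^2$ scalar conditions, and it cleanly separates the factor $\langle\eta|D^{\dagger}(\kappa,\lambda)|\eta\rangle\ne 0$ (which is precisely what makes ${\mathfrak M}$ invertible, so you may take it as given) from the new condition $\widehat f\ne 0$. You are also right to flag that this last non-vanishing is the real content and is a non-degeneracy fact about the specific operators $\sigma(i)$, not a formality; note that the paper does not prove it either --- it is absorbed into the same unproved appeal to ``genericity'' of the fiducial vector, whose combinatorial definition (linear independence of any $d$ coherent states) does not obviously imply $\det{\mathfrak s}\ne 0$. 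One caution on your suggested fix via proposition~\ref{cxz}(3): the crude bound $\bigl|{\rm Tr}[\sigma(0,0)D^{\dagger}(\kappa,\lambda)]\bigr|\ge {\mathfrak e}\,|\langle\eta|D^{\dagger}(\kappa,\lambda)|\eta\rangle|-(1-{\mathfrak e})$ does not close the gap in general, because $|\langle\eta|D^{\dagger}(\kappa,\lambda)|\eta\rangle|$ can be small; so either the direct verification from Eq.~(\ref{eq3}) or an explicit strengthening of the genericity hypothesis is indeed needed, for your argument and for the paper's alike.
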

\begin{proof}
This is seen from the fact that 
for a given set of values $\{{\cal S}(i)\}$, Eqs(\ref{zse}) form a system of $d^2$ equations with $d^2$ unknowns (the elements of the Hermitian matrix $\theta$).
In the single index notation,
we rewrite the $d\times d$  matrices $\sigma (i)$ as ${\mathfrak s}_{ij}$, and the $d\times d$  matrix $\theta$ as a vector $\theta _{j}$
(where $j=1,...,d^2$), and we get 
\begin{eqnarray}
S(i)=\frac{1}{d}\sum _j{\mathfrak s}_{ij}\theta _j;\;\;\;i,j\in \Omega.
\end{eqnarray}
The generic nature of the fiducial vector ensures that the determinant of ${\mathfrak s}_{ij}$ is non-zero.
We solve this system and we find the operator $\theta$ (and then we can calculate all the $Q(A|\theta)$).
\end{proof}

We use the notation $[\theta]$ for the equivalence class of $Q$-functions that contains the $Q$-function of the operator $\theta$.
It is easy to verify that 
\begin{eqnarray}
\lambda _1[\theta _1]+\lambda _2[\theta _2]=[\lambda _1\theta _1+\lambda _2\theta _2].
\end{eqnarray}
Therefore $\Sigma _{VQ}/\underset {S}\sim$ is isomorphic to $\Sigma _{Q}$:
\begin{eqnarray}
 \Sigma _{Q}\cong \Sigma _{VQ}/\underset {S}\sim .
\end{eqnarray}

\section{Generalization to total sets of states}\label{gen}

The work has been presented in the context of coherent states. However it could be generalized to any total set of states.
In the space $H(d)$, we consider $n>d$ states (which are not in general coherent states): 
\begin{eqnarray}
\Sigma=\{\ket{v_i}|i=1,...,n;\;n> d\}
\end{eqnarray}
Any subset of $d$ of these states are assumed to be linearly independent. Any subset of $k\ge d$ of these states, is a total set.
We do not have a resolution of the identity in terms of these $n$ states.
There is merit in using $\{\ket{v_i}\}$ instead of orthonormal bases, in the study of problems.
There is redundancy in them, which is important for error correction, and which is absent in orthonormal bases.
But they can only be practically useful if we have a resolution of the identity in terms of them.

The subject of matroids\cite{M1,M2,M3} provides a deep aproach to linear dependence. In this section, we  complement this with a method that replaces $\Sigma$ with
the following set of renormalized density matrices, which resolve the identity:
\begin{eqnarray}
\Sigma _{\rm ren}=\{\sigma (i)|i=1,...,n;\;n> d\};\;\;\;\frac{d}{n}\sum _{i=1}^n\sigma (i)={\bf 1}.
\end{eqnarray}
We renormalize the projectors $\Pi(i)=\ket{v_i}\bra{v_i}$ into density matrices $\sigma (i)$, using Eqs(\ref{eq2}),(\ref{eq3}), which are written in the present notation as
\begin{eqnarray}\label{eq20}
\sigma  (i)=\frac{d}{n}\sum _{A\subseteq \Omega \setminus \{i\}}\begin{pmatrix}n-1\\|A|\\\end{pmatrix}^{-1}\varpi (i|A)=
\frac{d}{n}\left [\Pi(i)+\frac{1}{n-1}\sum _j\varpi (i|j)+...\right ]
;\;\;\;i\in \Omega.
\end{eqnarray}
and
\begin{eqnarray}\label{eq30}
\sigma  (i)=\frac{n}{d}\sum _{A\ni i}\frac{{\mathfrak D} (A)}{|A|}=\frac{n}{d}\left [\Pi (i)+\sum _j\frac{{\mathfrak D} (i,j)}{2}+...\right ];\;\;\;i\in \Omega.
\end{eqnarray}
Here $\Omega$ is the set $\{1,...,n\}$ (the $d^2$ is replaced by $n$).
The $\varpi(i|A)$,  ${\mathfrak D} (A)$ are defined in analogous way to that described in sections \ref{IVB}, \ref{IVA}, correspondingly.
The $\Pi(i)$ and $\sigma (i)$ are general (not coherent) projectors and density matrices correspondingly.
Here the $\sigma(i)$, $\Pi(i)$ do not commute in general.

If the set $\Sigma$ is an orthonormal basis of $d$ vectors then $\sigma  (i)=\Pi (i)$.
\begin{example}
In $H(2)$ we consider the total set of states:
\begin{eqnarray}
\Sigma=\left \{\ket{X;0}, \frac{1}{\sqrt 5}(\ket{X;0}+2\ket{X;1})\right\}.
\end{eqnarray}
In this case $n=2$, and
\begin{eqnarray}
&&\Pi(1)=\begin{pmatrix}
1&0\\
0&0\\
\end{pmatrix}\rightarrow \sigma  (1)=\frac{1}{10}
\begin{pmatrix}
9&-2\\
-2&1\\
\end{pmatrix};\nonumber\\
&&\Pi(2)=\frac{1}{5}\begin{pmatrix}
1&2\\
2&4\\
\end{pmatrix}\rightarrow \sigma  (2)=\frac{1}{10}
\begin{pmatrix}
1&2\\
2&9\\
\end{pmatrix}
\end{eqnarray}
The resolution of the identity is $\sigma  (1)+\sigma  (2)={\bf 1}$.
\end{example}
\begin{example}
In $H(2)$ we consider the total set of states:
\begin{eqnarray}
\Sigma=\left \{\ket{X;0}, \ket{X;1}, \frac{1}{\sqrt 5}(\ket{X;0}+2\ket{X;1})\right\}.
\end{eqnarray}
In this case $n=3$, and
\begin{eqnarray}
{\mathfrak D}(1,2)=0;\;\;\;
{\mathfrak D}(1,3)=\frac{1}{5}\begin{pmatrix}
-1&-2\\
-2&1\\
\end{pmatrix};\;\;\;
{\mathfrak D}(2,3)=\frac{1}{5}\begin{pmatrix}
4&-2\\
-2&4\\
\end{pmatrix};\;\;\;
{\mathfrak D}(1,2,3)=\frac{1}{5}\begin{pmatrix}
-4&2\\
2&-1\\
\end{pmatrix}.
\end{eqnarray}
Therefore
\begin{eqnarray}
&&\Pi(1)=\begin{pmatrix}
1&0\\
0&0\\
\end{pmatrix}\rightarrow \sigma  (1)=\frac{1}{20}
\begin{pmatrix}
19&-2\\
-2&1\\
\end{pmatrix};\nonumber\\
&&\Pi(2)=\begin{pmatrix}
0&0\\
0&1\\
\end{pmatrix}\rightarrow \sigma  (2)=\frac{1}{20}
\begin{pmatrix}
4&-2\\
-2&16\\
\end{pmatrix}\nonumber;\\
&&\Pi(3)=\frac{1}{5}\begin{pmatrix}
1&2\\
2&4\\
\end{pmatrix}\rightarrow \sigma  (3)=\frac{1}{20}
\begin{pmatrix}
7&4\\
4&13\\
\end{pmatrix}.
\end{eqnarray}
We note the analogy between this example, and the example in section \ref{100}.
The resolution of the identity is
\begin{eqnarray}
\frac{2}{3}[\sigma  (1)+\sigma  (2)+\sigma  (3)]={\bf 1}.
\end{eqnarray}
\end{example}

\subsection{The Shapley methodology in cooperative game theory and its application in a quantum context}

We compare and contrast briefly, the Shapley methodology in cooperative game theory, with our application of this methodology in a quantum context.
\begin{itemize}
\item
The Shapley methodology in cooperative game theory renormalizes the values $v(i)$ of the various players, into the Shapley values ${\cal S}(i)$.
This takes into account the contribution of a player into coalitions, and is related to the fact that $v(i,j,...)\ne v(i)+v(j)+...$ (Eq.(\ref{149})).
Our formalism renormalizes  the projectors $\Pi(i)$ into $\sigma(i)$.
This takes into account the contribution of a state into aggregations of many states, and is related to 
the fact that $\Pi(i,j,...)\ne \Pi(i)+\Pi(j)+...$ (Eq.(\ref{123})).

\item
In cooperative game theory the `total worth\rq{} is shared among the players and $\sum {\cal S}(i)=v(N)$.
The $\sum v(i)$ might or might not be equal to $v(N)$.
In our formalism the identity ${\bf 1}$ is `shared\rq{} among various states in $\Omega$ and we have the resolution of the identity $\frac{d}{n}\sum \sigma (i)={\bf 1}$. 
The $\sum \Pi (i)$ might (the case with coherent projectors) or might not (the case considered in this section) be equal to ${\bf 1}$.

\item
If for all coalitions $v(i,j,...)= v(i)+v(j)+...$ then ${\cal S}(i)=v(i)$.
In a quantum context, the orthonormal bases obey the relation $\Pi(i,j,...)= \Pi(i)+\Pi(j)+...$, and then  $\sigma (i)=\Pi(i)$.
\end{itemize}

\section{Applications}\label{appl}

\subsection{Comonotonic Hermitian  positive semidefinite operators}

We order the values of the $Q$-function of a Hermitian  positive semidefinite operator $\theta$ as follows:
\begin{eqnarray}
Q(i_1|\theta)\ge Q(i_2|\theta)\ge ...\ge Q(i_{d^2}|\theta)\ge 0.
\end{eqnarray}
We call the $(i_1,i_2,...,i_{d^2})_{\theta, Q}$ `location index\rq{} of $\theta$. 
The operator $\theta$ `lives\rq{} primarily at the point $i_1$ in the ${\mathbb Z}(d)\times {\mathbb Z}(d)$
phase space, to a lesser extent at the point $i_2$, to even lesser extent at the point $i_3$, etc.
In a similar way we define the $(i_1,i_2,...,i_{d^2})_{\theta, S}$, with respect to the Shapley values.
In tables \ref{t1}, \ref{t2}, we give the locations indices for the operators in Eq.(\ref{example}).
The two location indices are in general different (e.g., compare the location indices for the operator $\theta_4$). 

In the case that some values of the $Q$-function are equal to each other, we use braces in the notation.
For example, the location index for $\theta _1$ in table \ref{t1} is $(\{2,5,8\},\{3,6,9\},\{1,4,7\})_{\theta _1,Q}$,
because $Q(2)=Q(5)=Q(8)$, and $Q(3)=Q(6)=Q(9)$, and $Q(1)=Q(4)=Q(7)$.
Similar notation is used for the case where some of the Shapley values are equal to each other.
 
We next consider the set  ${\cal M}_n(Q)$ (${\cal M}_n(S)$) of operators for which 
the $n$ largest values of the $Q$-function  (Shapley function) are different from each other.
They are both subsets of the set of all Hermitian  positive semidefinite operators. 
\begin{definition}
Two operators $\theta, \phi \in {\cal M}_n(Q)$ are called $(n,Q)$-comonotonic or $(n,Q)$-cohabitant, if 
the first $n$ integers in the location index of $\theta$, are equal to the first $n$ integers in the location index of $\phi$,
with respect to the $Q$-function
\begin{eqnarray}
i_k({\theta, Q})=i_k({\phi, Q});\;\;\;k=1,...,n.
\end{eqnarray}
We denote this as $\theta \underset {n, Q}\sim \phi$.  
In a similar way we define $(n,S)$-comonotonic or $(n,S)$-cohabitant operators in ${\cal M}_n(S)$, with respect to the Shapley values.
\end{definition}
This definition is motivated by the expectation that $n$-cohabitant operators will have similar physical properties.
Within ${\cal M}_n(Q)$ transitivity holds, and $\underset {n,Q}\sim$ is an equivalence relation.
Similarly, $\underset {n,S}\sim$ is an equivalence relation in ${\cal M}_n(S)$.

In ref.\cite{v16b}, we have defined (using another equivalent definition) what in the present notation is $(d,Q)$-comonotonicity, and used it in connection with Choquet integrals.
Here we define the $(n,Q)$-comonotonicity for any $n$, and we also define the $(n,S)$-comonotonicity with respect to the Shapley values.
$n$ describes the `strength of cohabitation\rq{} (if $m\ge n$, the $m$-comonotonicity is stronger concept than the $n$-comonotonicity).
\begin{remark}
Above we considered operators in ${\cal M}_n(Q)$.
If we define $\underset {n,Q}\sim$ in the larger set ${\mathfrak H}(d)$, then transitivity might not hold.
For example, let $\theta, \phi, \psi$ be three operators for which the two largest values of the
$Q$-function, are $Q(1)$ and $Q(2)$. We also assume that
\begin{eqnarray}
Q(1|\theta)>Q(2|\theta);\;\;\;Q(1|\phi)=Q(2|\phi);\;\;\;Q(1|\psi)<Q(2|\psi)
\end{eqnarray}
In this example, $\theta \underset {2,Q}\sim \phi$ and $\phi \underset {2,Q}\sim \psi$, but it is not true that
$\theta \underset {2,Q}\sim \psi$.
Similar comment can be made for ${\cal M}_n(S)$.
\end{remark}

\subsection{Relocation of a Hamiltonian in phase space and physical consequences}\label{nm}

In many applications we have an operator $\theta (\lambda)$ which is a continuous function of a coupling constant $\lambda$.
One example is 
a Hamiltonian $\theta (\lambda)=\theta _1+\lambda \theta _2$, where $\theta _1$ is the free part, $\theta _2$ the interaction part, and $\lambda$ the coupling constant.
Another example is the
$\ket{g(\lambda)}\bra{g(\lambda)}$ where $\ket{g(\lambda)}$ is the eigenstate of a Hamiltonian
$\theta (\lambda)$ corresponding to the lowest eigenvalue (ground state).

We have defined in ref.\cite{v16b} comonotonicity intervals of the coupling constant in such operators, and we used them with Choquet integrals.
Below we show that they are important in their own right, and that they can be used to quantify the concept of relocation of a Hamiltonian in phase space, as the coupling constant varies.

\begin{definition}
Let $\theta (\lambda)$ be an operator that depends on a real parameter $\lambda$.
${\cal R}\subseteq {\mathbb R}$ is `$(n,Q)$-comonotonicity interval\rq{}, or `$(n,Q)$-cohabitation interval\rq{}, if for any pair $\lambda _1, \lambda _2$ in ${\cal R}$,
the $\theta (\lambda _1)$, $\theta (\lambda _2)$ are $(n,Q)$-comonotonic.
In a similar way, using the Shapley values, we define a `$(n,S)$-comonotonicity interval\rq{}, or `$(n,S)$-cohabitation interval\rq{}.
\end{definition}
We will use these ideas with Hamiltonians $\theta (\lambda)$ that depend on a coupling constant $\lambda$.
We show with examples, that as $\lambda$ changes within a cohabitation interval, physical quantities related to this Hamiltonian 
(e.g., the ground state of the system), change slowly. 
When $\lambda$ crosses from one cohabitation interval to another, the Hamiltonian 
relocates from one region to another in phase space, and physical quantities change more drastically.
\begin{example}\label{ex12}
We consider the following Hamiltonian in the Hilbert space $H(3)$
\begin{eqnarray}
\theta(\lambda)={\bf 1}+\lambda \ket{X;0}\bra{X;0}
\end{eqnarray}
where $\lambda$ is a coupling constant.
The eigenvalues of this Hamiltonian are $\kappa_1=1+\lambda$, $\kappa _2=\kappa _3=1$.
For negative $\lambda$, the lowest eigenvalue is $\kappa_1$, and the corresponding eigenstate (`ground state')
$\ket{X;0}$. For positive $\lambda$, the lowest eigenvalues are $\kappa _2=\kappa _3=1$
and the corresponding eigenstates $\alpha \ket{X;1}+\beta \ket{X;2}$, with arbitrary $\alpha, \beta$.
Clearly, a drastic change in the ground state occurs at $\lambda=0$, which in a large (ideally infinite) system would be a phase transition.

The Shapley values $S(i)$, and the $Q$-values $Q(i)$ of Eq.(\ref{mmm}),  are
\begin{eqnarray}
S(i)=\frac{1}{3}+\lambda S(i|\ket{X;0}\bra{X;0});\;\;\;Q(i)=\frac{1}{3}+\lambda Q(i|\ket{X;0}\bra{X;0})
\end{eqnarray}
where $S(i|\ket{X;0}\bra{X;0})$ and $Q(i|\ket{X;0}\bra{X;0})$ are given in the first row in tables \ref{t1}, \ref{t2}, correspondingly.
As $\lambda$ changes from negative to positive values, the order within
the location index of the Hamiltonian is reversed (with respect to both the Shapley values and the $Q$ values):
\begin{eqnarray}
(\{1,4,7\},\{3,6,9\}, \{2,5,8\})\;\rightarrow\;(\{2,5,8\},\{3,6,9\},\{1,4,7\}).
\end{eqnarray}
It is seen that relocation of the Hamiltonian in phase space, is associated with large changes in the ground state.

In a similar way consider the following Hamiltonian
\begin{eqnarray}
\theta(\lambda)={\bf 1}+\lambda \ket{C;1,2}\bra{C;1,2}
\end{eqnarray}
The eigenvalues of this Hamiltonian are $\kappa_1=1+\lambda$, $\kappa _2=\kappa _3=1$, and the eigenstates
$\ket{C;1,2}$ and $\alpha \ket{s_1}+\beta \ket{s_2}$, correspondingly. Here
$\ket{s_1},\ket{s_2}$ are two vectors perpendicular to the coherent state $\ket{C;1,2}$.
In this example also, for negative $\lambda$ the lowest eigenvalue is $\kappa_1$, and for positive $\lambda$ it is $\kappa _2=\kappa _3=1$.
Therefore the ground state changes drastically as we go from negative to positive values of $\lambda$.

The Shapley values $S(i)$ and the $Q$-values $Q(i)$,  are
\begin{eqnarray}
S(i)=\frac{1}{3}+\lambda S(i|\ket{C;1,2}\bra{C;1,2});\;\;\;Q(i)=\frac{1}{3}+\lambda Q(i|\ket{C;1,2}\bra{C;1,2})
\end{eqnarray}
where $S(i|\ket{C;1,2}\bra{C;1,2})$ and $Q(i|\ket{C;1,2}\bra{C;1,2})$ are given in the second row in tables \ref{t1}, \ref{t2}, correspondingly.
As $\lambda$ changes from negative to positive values, the order within 
the location index of the Hamiltonian is reversed (with respect to both the Shapley values and the $Q$ values):
\begin{eqnarray}
(2,7,8,1,5,4,3,9,6)\;\rightarrow\;(6,9,3,4,5,1,8,7,2).
\end{eqnarray}
In this example also, relocation of the Hamiltonian in phase space is associated with large changes in the ground state.

We note that in these examples, both the Shapley values and the $Q$-values lead to the same conclusions.
This is a desirable feature, because it shows that the method is stable, when we change the definition of the $Q$-function.
However in the next example, the location indices based on Shapley values are better correlated with changes in the ground state of the system,
than the location indices based on $Q$-values.
\end{example}

\begin{example}\label{exa12}
We consider the following Hamiltonian in the Hilbert space $H(3)$
\begin{eqnarray}\label{ham}
\theta(\lambda)=
\begin{pmatrix}
11+2\lambda&(1-\lambda )-i(1-2\lambda)&3\lambda\\
(1-\lambda )+i(1-2\lambda)&8+4\lambda&-\lambda\\
3\lambda&-\lambda&13
\end{pmatrix}.
\end{eqnarray}
The matrix is expressed in the position basis, and $\lambda$ is a coupling constant.
We have calculated the Shapley values $S(i)$ of Eqs(\ref{zse})  and ordered them, for various values of $\lambda$
(for the coherent states we used the fiducial vector in Eq.(\ref{fi})).
The location indices $(i_1,...,i_{d^2})_{\theta, S}$ (using the Shapley values) and 
$(i_1,...,i_{d^2})_{\theta, Q}$ (using the $Q$-values),
are shown in table \ref{t3}. These two types of location indices differ only for $\lambda =0.25$.

We studied the ground state  of this Hamiltonian, i.e., its eigenstate corresponding to the lowest eigenvalue $h_1$:
\begin{eqnarray}
\theta (\lambda)\ket{g(\lambda)}=h_1\ket{g(\lambda)}
\end{eqnarray}
Table \ref{t3}, shows the overlap
$|\langle g(-0.75)\ket{g(\lambda)}|$
between the eigenvectors $\ket{g(\lambda)}$ and $\ket{g(-0.75)}$ (which is taken as a reference vector). It also shows the eigenvalues $h_1,h_2,h_3$.

With respect to the Shapley values, the interval $[-0.75, -0.5]$ and also the interval $[-0.25, 0.25]$ are both $(1,S)$-comonotonicity intervals.
For $\lambda \in (-0.5,-0.25)$, the Hamiltonian relocates in phase space.
Relocations occur also when $\lambda$ varies in the interval $(0.25, 0.75)$.
This is well correlated with the fact that
there is a significant change in the value of $|\langle g(-0.75)\ket{g(\lambda)}|$, in the region from $\lambda=0.25$ to $\lambda =0.75$.

With respect to the $Q$-function, the interval $[-0.75, -0.5]$ and the interval $[-0.25, 0]$ are both $(1,Q)$-comonotonicity intervals.
The interval $[0.25, 0.5]$ is a $(2,Q)$-comonotonicity interval. The Hamiltonian relocates in phase space when $\lambda$ varies in the intervals 
$(-0.5, -0.25)$, and $(0, 0.25)$, and $(0.5, 0.75)$. 
Here the correlation between the relocation of the Hamiltonian in phase space, and changes in its ground state, is not as strong as above.
The interval $[0.25, 0.5]$ is a $(2,Q)$-comonotonicity interval, and yet 
there is significant change in the value of $|\langle g(-0.75)\ket{g(\lambda)}|$ between $\lambda =0.25$ to $\lambda =0.5$.
In this example, the location indices based on Shapley values are better correlated with changes in the ground state of the system,
than the location indices based on $Q$-values.
\end{example}
Overall, the location indices and the comonotonicity intervals, describe the
relocation of the Hamiltonian in phase space as the coupling constant varies, and this is linked to large changes in physical quantities like the ground state. 
In examples \ref{ex12} this link is seen equally well with both the Shapley values and the $Q$-function, but in example \ref{exa12} 
this link is seen better with the Shapley values, than with the $Q$-function.
Further work is needed here, in order to assess the relative merits of the use of relocation indices based on Shapley values or on $Q$-values,
for the study of changes in the ground state.
Application of the method to higher dimensional systems, can provide insight to phase transitions, chaos, etc.

We note that the Wehrl entropy\cite{We} of the $Q$-function, has been used for the study of ground state of quantum systems(e.g., \cite{We1}).
The Wehrl entropy of a positive semidefinite Hermitian operator $\theta$, normalized so that ${\rm Tr}\theta=1$, is given by
$-\sum Q(i|\theta)\ln Q(i|\theta)$, and it is invariant under permutations of the values of the $Q$-function.
Any change in the location of the Hamiltonian in phase space (as defined above), which does not change the set of values $\{Q(i|\theta)\}$,
does not change the Wehrl entropy. Our approach is complementary to the Wehrl entropy, because it is 
based on the location of the Hamiltonian in phase space, which the Wehrl entropy is not able to detect.

\section{Discussion}\label{D}

We have introduced coherent density matrices, which are mixed states that resolve the identity, and which have a closure property 
where under displacements they are transformed into other coherent density matrices. 
We also introduced coherent projectors of rank $n$, to spaces spanned by aggregations of $n$ coherent states ($n\le d$).

We then used the Shapley methodology in cooperative game theory, to renormalize (dress) the coherent projectors $\Pi(i)=\ket{C;i}\bra{C;i}$
into the coherent density matrices $\sigma (i)$ given in proposition \ref{V2}.
The formalism adds to $\Pi(i)$ the contribution of the coherent state $\ket{C;i}$ into aggregations of several coherent states, expressed in terms of
the M\"obius transformations ${\mathfrak D}(A)$ or the projectors $\varpi (i|A)$. 
Consequently the $Q$-function $Q(i)$, is renormalized into the Shapley values $S(i)$.

We also used this methodology with an arbitrary total set of  states, for which we have no resolution of the identity.
The redundancy related to the `total set\rq{} property of the set, is desirable because it can lead to error correction.
But for practical applications a resolution of the identity is needed.
The renormalization formalism, leads to density matrices that resolve the identity.

As an application we studied in section \ref{nm}
the relocation of a Hamiltonian in phase space as the coupling constant varies, and
its effect on the ground state of the system.
We found that in some cases (e.g., example \ref{exa12}) the technique works better with the Shapley values than with the $Q$ function,
but further work is needed for formal results in this direction.

 More generally, the Shapley values in a quantum context is a novel addition to phase space methods, which needs further study.

\newpage
\begin{table}
\caption{The Shapley values of Eq.(\ref{zse}), for the operators $\theta$ given in Eq.(\ref{example}).
The location indices are also shown.}
\def\arraystretch{3}
\begin{tabular}{|c|c|c|c|c|c|c|c|c|c|c|}\hline
&$S(1|\theta)$&$S(2|\theta)$&$S(3|\theta)$&$S(4|\theta)$&$S (5|\theta)$&$S(6|\theta)$&$S(7|\theta)$&$S(8|\theta)$&$S(9|\theta)$&$(i_1,...,i_9)_{\theta,S}$\\\hline
$\theta_1$&0.054&0.209&0.070&0.054&0.209&0.070&0.054&0.209&0.070&$(\{2,5,8\},\{3,6,9\},\{1,4,7\})$\\\hline
$\theta_2$&0.083&0.054&0.155&0.089&0.084&0.241&0.057&0.081&0.156&$(6,9,3,4,5,1,8,7,2)$\\\hline
$\theta_3$&0.287&0.153&0.224&0.260&0.135&0.284&0.224&0.213&0.219&$(1,6,4,\{3,7\},9,8,2,5)$\\\hline
$\theta_4$&1.106&1.481&1.604&1.390&0.996&1.210&1.551&1.058&1.601&$(3,9,7,2,4,6,1,8,5)$\\\hline
\end{tabular} \label{t1}
\end{table}

\begin{table}
\caption{The $Q$-values of Eq.(\ref{mmm}), for the operators $\theta$ given in Eq.(\ref{example}).The location indices are also shown.}
\def\arraystretch{3}
\begin{tabular}{|c|c|c|c|c|c|c|c|c|c|c|}\hline
&$Q(1|\theta)$&$Q(2|\theta)$&$Q(3|\theta)$&$Q(4|\theta)$&$Q (5|\theta)$&$Q(6|\theta)$&$Q(7|\theta)$&$Q(8|\theta)$&$Q(9|\theta)$&$(i_1,...,i_9)_{\theta,Q}$\\\hline
$\theta_1$&0.019&0.276&0.038&0.019&0.276&0.038&0.019&0.276&0.038&$(\{2,5,8\},\{3,6,9\},\{1,4,7\})$\\\hline
$\theta_2$&0.063&0.016&0.184&0.068&0.068&0.333&0.016&0.063&0.184&$(6,\{9,3\},\{4,5\},\{1,8\},\{7,2\})$\\\hline
$\theta_3$&0.333&0.105&0.226&0.282&0.072&0.333&0.222&0.208&0.215&$(\{1,6\},4,3,7,9,8,2,5)$\\\hline
$\theta_4$&0.933&1.609&1.781&1.429&0.776&1.120&1.693&0.842&1.813&$(9,3,7,2,4,6,1,8,5)$\\\hline
\end{tabular} \label{t2}
\end{table}
\begin{table}
\caption{The location indices $(i_1,...,i_{d^2})_{\theta, S}$ (using the Shapley values) and 
$(i_1,...,i_{d^2})_{\theta, Q}$ (using the $Q$-values in Eq.(\ref{mmm})), for the Hamiltonian of Eq.(\ref{ham}) as a function of $\lambda$ . Its eigenvalues
$h_1,h_2,h_3$ and the overlap $|\langle g(-0.75)\ket{g(\lambda)}|$, are also shown.}
\def\arraystretch{2}
\begin{tabular}{|c|c|c|c|c|c|c|c|}\hline
$\lambda$&$(i_1,...,i_{d^2})_{\theta,S}$ &$(i_1,...,i_{d^2})_{\theta, Q}$ &$|\langle g(-0.75)\ket{g(\lambda)}|$&$h_1$&$h_2$&$h_3$\\\hline
$-0.75$&$(4,2,7,1,5,8,9,3,6)$&$(4,2,7,1,5,8,9,3,6)$&$1$&$3.21$&$10.02$&$14.25$\\\hline
$-0.5$&$(4,7,2,1,5,8,9,3,6)$&$(4,7,2,1,5,8,9,3,6)$&$0.998$&$4.67$&$10.60$&$13.71$\\\hline
$-0.25$&$(7,4,1,2,8,5,9,3)$&$(7,4,1,2,8,5,9,3)$&$0.993$&$6.09$&$11.16$&$13.24$\\\hline
$0$&$(7,1,4,2,8,5,9,3)$&$(7,1,4,2,8,5,9,3)$&$0.978$&$7.43$&$11.56$&$13.00$\\\hline
$0.25$&$(7,1,4,8,2,5,9,3)$&$(1,7,4,8,2,5,9,3)$&$0.943$&$8.66$&$11.51$&$13.32$\\\hline
$0.5$&$(1,7,8,5,4,2,9,3)$&$(1,7,8,5,4,2,9,3)$&$0.838$&$9.62$&$11.29$&$14.08$\\\hline
$0.75$&$(8,1,7,5,6,4,9,2,3)$&$(8,1,7,5,6,4,9,2,3)$&$0.535$&$9.90$&$11.51$&$15.08$\\\hline
\end{tabular} \label{t3}
\end{table}

\end{document}